\tikzset{>=stealth}
\pgfplotsset{compat=1.17}
\newcommand{\qsize}{\ensuremath{k}} %
\newcommand{\protAgree}[1][k]{\ensuremath{\mathcal{A}_{#1}}} %
\newcommand{\protChain}[1][k]{\ensuremath{\mathcal{B}_{#1}}} %
\newcommand{\hash}{\ensuremath{\mathcal{H}}} %
\newcommand{\genesisHash}{\ensuremath{H_0}} %
\def\nc{Nakamoto consensus}
\def\k{51} %
\def\citerepo{~\cite{coderepo}}
\DeclareMathOperator\opParent{parent}
\DeclareMathOperator\opQuorum{quorum}
\DeclareMathOperator\opPayload{payload}
\DeclareMathOperator\opSignature{signature}
\DeclareMathOperator\opState{state}
\DeclareMathOperator\opHeight{height}
\DeclareMathOperator\opVotes{votes}
\newcommand{\vthres}{\ensuremath{t_{\operatorname{v}}}}
\newcommand{\varReference}{\ensuremath{r}}
\newcommand{\varID}{\ensuremath{p}}
\newcommand{\varSolution}{\ensuremath{s}}
\newcommand{\initialhead}{\genesisHash} %
\newcommand{\votenp}{\ensuremath{\varReference,\varID,\varSolution}}
\newcommand{\vote}{\ensuremath{(\votenp)}}
\newcommand{\blockstore}{\ensuremath{\mathcal T}}
\providecommand\myID{\ensuremath{\operatorname{pk}}}
\providecommand\myKey{\ensuremath{\operatorname{sk}}}
\newtheorem{proposition}{Proposition}  %
\theoremstyle{definition}
\newtheorem{definition}{Definition}  %
\newtheorem{remark}{Remark}
\newcommand{\event}[1]{{\normalfont$\langle$\,#1\,$\rangle$}}
\patchcmd{\ALG@doentity}{\item[]\nointerlineskip}{}{}{}
\def\algskip{\vspace{1ex}}
\begin{document}

\title{Parallel Proof-of-Work with Concrete Bounds}

\author{Patrik Keller}
\email{patrik.keller@student.uibk.ac.at}
\affiliation{
  \institution{University of Innsbruck}
  \country{Austria}
}

\author{Rainer Böhme}
\email{rainer.boehme@uibk.ac.at}
\affiliation{
  \institution{University of Innsbruck}
  \country{Austria}
}

\begin{abstract}
Authorization is challenging in distributed systems that cannot rely on the identification of nodes.
Proof-of-work offers an alternative gate-keeping mechanism, but its probabilistic nature is incompatible with conventional security definitions.
Recent related work establishes concrete bounds for the failure probability of Bitcoin's \emph{sequential} proof-of-work mechanism. %
We propose a new family of state replication protocols that use \emph{parallel} proof-of-work.
Our bottom-up design from an agreement sub-protocol allows us to give concrete bounds for the failure probability in adversarial synchronous networks.
State updates can be sufficiently secure to support commits after one block%
, removing the risk of double-spending in many applications.
We offer guidance on the optimal choice of parameters for a wide range of network and attacker assumptions.
Simulations show that the proposed construction is robust even against partial violations of our design assumptions.
\end{abstract}

\maketitle

\section{Introduction} \label{sec:intro}

Bitcoin's use of proof-of-work puzzles to secure state replication \emph{without} relying on the identification of nodes was praised as a technical novelty~\cite{abraham2017BlockchainConsensus}.
While initially supported with heuristic arguments~\cite{nakamoto2008BitcoinPeertopeer}, the security of the so-called Nakamoto consensus has  been analyzed rigorously over the past decade~\cite{garay2015BitcoinBackbone, pass2017AnalysisBlockchain, kiffer2018BetterMethod, gazi2020TightConsistency, dembo2020EverythingRace}.
All of these works prove \emph{asymptotic} security in various models.
Only recently, in AFT\,'21, Li et al.~\cite{li2021CloseLatency}
gave \emph{concrete} security bounds for the failure probability in adversarial synchronous networks.
While asymptotic bounds establish that a protocol is secure if one waits ``long enough,'' concrete bounds tell users \emph{how} long they have to wait before accepting a state update as final.
All major threats against Bitcoin's security, including double-spending and selfish mining, exploit this uncertainty in some way or another~\cite{karame2012DoublespendingFast, eyal2014MajorityNot, gervais2016SecurityPerformance, conti2018SurveySecurity}.

Nakamoto consensus uses \emph{sequential} proof-of-work: each puzzle refers back to exactly one previous puzzle solution (Fig.~\ref{fig:seq_vs_par}, left half).
A number of \emph{non-sequential} proof-of-work protocols deviate from this scheme to improve throughput or mitigate known security threats%
~\cite{kogias2016EnhancingBitcoin, sompolinsky2015SecureHighrate, sompolinsky2016SPECTREFast, bissias2020BobtailImproved}.
The approaches seem promising, but their design is heuristic.
For example, Bobtail~\cite{bissias2020BobtailImproved} argues that multiple puzzles per block imply more regular block intervals which makes it harder for attackers to double-spend.
All proposals lack security proofs, let alone concrete bounds.
Therefore, one fundamental question remains open: can non-sequential proof-of-work improve the security of state replication?

This work proposes a principled construction of state replication. %
We start from the assumptions established by the literature on sequential proof-of-work~\cite{garay2015BitcoinBackbone, pass2017AnalysisBlockchain, kiffer2018BetterMethod, gazi2020TightConsistency, dembo2020EverythingRace, li2021CloseLatency}.
We then show how agreement on the latest state can be reached with bounded worst-case failure probabilities.
By repeating the agreement procedure, we obtain a family of replication protocols that inherit the concrete error bound.
The proposed scheme uses $k$~independent puzzles per block; we thus call it $parallel$ proof-of-work (Fig.~\ref{fig:seq_vs_par}, right half).

To showcase the advantage of parallel proof-of-work, we evaluate a protocol instance that uses $k=\k$ puzzles per block while maintaining Bitcoin's expected block interval of 10 minutes.
It guarantees consistency \emph{after one block} up to a defined failure probability of \num{2.2e-4} for an attacker with 25\,\% compute power and two seconds worst-case message propagation delay (cf.~Table~\ref{tab:runtime600} below in Sect.~\ref{sec:li2021}).
Attacking consistency, e.\,g., by rewriting an already confirmed block, requires spending work on thousands of blocks without success.
For comparison, the optimal configuration of sequential proof-of-work, a ``fast Bitcoin'' with 7 blocks per minute, has a failure probability of 9\,\% in the same conditions~\cite{li2021CloseLatency}.%
\footnote{
  Bitcoin as deployed is clearly worse.
  Li et al.~\cite{li2021CloseLatency} do not even provide a failure probability for the common rule of thumb to wait six blocks (1 hour).
}
An attacker would succeed once in roughly every 2 hours.

\begin{figure}
 \centering
 \bgroup

\newcommand{\distGamma}{\ensuremath{\operatorname{Gamma}}}
\newcommand{\distExp}{\ensuremath{\operatorname{Exp}}}

\newsavebox{\seqParCheckedList}
\savebox{\seqParCheckedList}{\tikz{
    \node[] at (0,0) {\faList};
    \node[draw, circle, fill=white, inner sep=0.5pt] at (3pt,-3pt) {\tiny\faCheck};
}}

\begin{tikzpicture}[>=stealth, xscale=1.05]
  \begin{scope}[shift={(-2.5,0)}]
    \node[] at (0,0) {\usebox{\seqParCheckedList}};
    \node[] at (1,0) {\usebox{\seqParCheckedList}};
    \node[] at (2,0) {\usebox{\seqParCheckedList}};
    \node[] at (3,0) {\usebox{\seqParCheckedList}};
    \path[<-, draw] (0 + 0.3, 0) -- (1 - 0.3, 0);
    \path[<-, draw] (1 + 0.3, 0) -- (2 - 0.3, 0);
    \path[<-, draw] (2 + 0.3, 0) -- (3 - 0.3, 0);
    \draw (0 - 0.3, -0.3) rectangle (0 + 0.3, 0.3);
    \draw (1 - 0.3, -0.3) rectangle (1 + 0.3, 0.3);
    \draw (2 - 0.3, -0.3) rectangle (2 + 0.3, 0.3);
    \draw (3 - 0.3, -0.3) rectangle (3 + 0.3, 0.3);
  \end{scope}


  \begin{scope}[shift={(1.90,.5)}, yscale=-0.2]
    \node[inner sep=2pt] (t00) at (0,0) {\tiny\faCheck};
    \node[inner sep=2pt] (t01) at (0,1) {\tiny\faCheck};
    \node[inner sep=2pt] (t02) at (0,2) {\tiny\faCheck};
    \node[inner sep=2pt] (t03) at (0,3) {\tiny\faCheck};
    \node[inner sep=2pt] (t10) at (1,0) {\tiny\faCheck};
    \node[inner sep=2pt] (t11) at (1,1) {\tiny\faCheck};
    \node[inner sep=2pt] (t12) at (1,2) {\tiny\faCheck};
    \node[inner sep=2pt] (t13) at (1,3) {\tiny\faCheck};
    \node[inner sep=2pt] (t20) at (2,0) {\tiny\faCheck};
    \node[inner sep=2pt] (t21) at (2,1) {\tiny\faCheck};
    \node[inner sep=2pt] (t22) at (2,2) {\tiny\faCheck};
    \node[inner sep=2pt] (t23) at (2,3) {\tiny\faCheck};
    \node[inner sep=2pt] (t30) at (3,0) {\tiny\faCheck};
    \node[inner sep=2pt] (t31) at (3,1) {\tiny\faCheck};
    \node[inner sep=2pt] (t32) at (3,2) {\tiny\faCheck};
    \node[inner sep=2pt] (t33) at (3,3) {\tiny\faCheck};
    \node[inner sep=2pt] (l0) at (0,4.7) {\faList};
    \node[inner sep=2pt] (l1) at (1,4.7) {\faList};
    \node[inner sep=2pt] (l2) at (2,4.7) {\faList};
    \node[inner sep=2pt] (l3) at (3,4.7) {\faList};
    \draw (0 - 0.3,-1) rectangle (0 + 0.3,6);
    \draw (1 - 0.3,-1) rectangle (1 + 0.3,6);
    \draw (2 - 0.3,-1) rectangle (2 + 0.3,6);
    \draw (3 - 0.3,-1) rectangle (3 + 0.3,6);
    \draw[<-] (0 + 0.3, 0) -- (t10);
    \draw[<-] (0 + 0.3, 1) -- (t11);
    \draw[<-] (0 + 0.3, 2) -- (t12);
    \draw[<-] (0 + 0.3, 3) -- (t13);
    \draw[<-] (1 + 0.3, 0) -- (t20);
    \draw[<-] (1 + 0.3, 1) -- (t21);
    \draw[<-] (1 + 0.3, 2) -- (t22);
    \draw[<-] (1 + 0.3, 3) -- (t23);
    \draw[<-] (2 + 0.3, 0) -- (t30);
    \draw[<-] (2 + 0.3, 1) -- (t31);
    \draw[<-] (2 + 0.3, 2) -- (t32);
    \draw[<-] (2 + 0.3, 3) -- (t33);
  \end{scope}


\end{tikzpicture}

\medskip
\small
\hspace{2pt}%
\tikz\node[draw,label=0:block\strut] at (0,0) {\phantom{l}};
\hfill
\tikz\node[label=0:puzzle solution\strut] at (0,0) {\scriptsize\faCheck};
\hfill
\tikz\draw[<-] (-0.17, 2) -- (0.17, 2) node[right] {hash-reference\strut};
\hfill
\tikz\node[label=0:state update\strut] at (0,0) {\faList};%
\par

\egroup
 \caption{%
   Schematic comparison of sequential (Bitcoin, left) and parallel (proposed, right) proof-of-work blockchains.
 }
 \label{fig:seq_vs_par}
\end{figure}
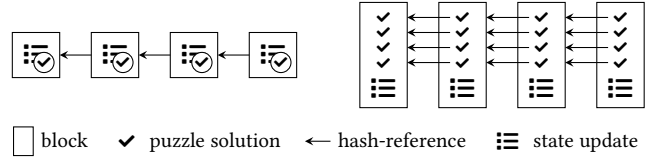

This paper makes several contributions.
We define a family of proof-of-work agreement protocols~$\protAgree$, provide upper bounds for the worst-case failure probability, and show how to find optimal parameters for different attacker and synchrony parameters.
Then we construct a family of replication protocols~$\protChain$, which invoke~$\protAgree$ iteratively to secure a blockchain.
We implement~$\protChain$ and evaluate it for robustness and security within and beyond the design assumptions using network simulation.
We parametrize our simulations to allow for a direct comparison to sequential proof-of-work as used in Bitcoin.
We offer guidance on how to parametrize~$\protChain$ for other settings.
For replicability and future research, we make the protocol and simulation code available online\citerepo{}.

The paper is organized along these contributions.
Section~\ref{sec:voting} presents and analyzes the agreement protocol.
We specify the replication protocol in Section~\ref{sec:protocol}  and evaluate it in Section~\ref{sec:eval}.
We discuss the relation to the relevant literature, limitations, and future work in Section~\ref{sec:discussion}.
Section~\ref{sec:conclusion} concludes. %

\section{Proof-of-Work Agreement} \label{sec:voting}

Agreement protocols allow their participants to unambiguously agree on a single value~\cite{pease1980ReachingAgreement}.
We consider a family of agreement protocols \protAgree{} where the participants cast votes for the value with the most existing votes until $k$ votes exist for the same value.
We rate-limit using proof-of-work: each vote requires a puzzle solution.
The votes are independent, hence puzzles can be solved in parallel.

\sloppy
In the remainder of this section, we state our model for distributed systems and proof-of-work (Sec.~\ref{sec:voting:model}),
we specify~\protAgree{} (Sec.~\ref{sec:voting:proto}), and evaluate its security considering worst-case message scheduling (Sec.~\ref{sec:voting:safe0}) as well as adversarial behavior (Sec.~\ref{sec:voting:alpha}).
We also guide the choice of parameters (Sec.~\ref{sec:voting:config}) and compare to the best known security bounds for sequential proof-of-work (Sec.~\ref{sec:voting:comparison}).

\subsection{Model} \label{sec:voting:model}

We describe an environment that simulates the execution of~\protAgree{} over continuous time.
We encode assumptions on computation, communication, and proof-of-work by explicitly defining some aspects of the environment.
In other places, we give room for worst-case behavior.
For example, we do not restrict the number of participating nodes to reflect a \enquote{permissionless} system.

\subsubsection{Event-Based Computation} \label{sec:voting:model:event}

We specify the protocol as a set of event-handlers.
The environment maintains the local state for each participating node and it invokes event-handlers for initialization, proof-of-work, and message delivery.
Each invocation takes place at a single point in time.
We write \event{ event } for events without associated data and \event{ event $\mid$ data } otherwise.

The environment invokes the \event{init~$|$~$x$} handler for each node at time~$0$.
The initialization values $x$ can be different for each node.
When a node~$A$ invokes the procedure \Call{terminate}{$x$}, the value $x$ becomes A's local result of the agreement protocol.
The environment will stop invoking further event-handlers for~$A$.
We say~$A$ terminated with return value~$x$.

\subsubsection{Communication} \label{sec:voting:model:communication}

We adopt the $\Delta$-synchronous communication model from prior analyses of \nc{}~\cite{pass2017AnalysisBlockchain, kiffer2018BetterMethod, gazi2020TightConsistency, dembo2020EverythingRace, li2021CloseLatency}.
Nodes share information by calling the \Call{broadcast}{} procedure.
If a node $A$ calls \Call{broadcast}{$m$} at time~$t$,
then the environment invokes the \event{deliver $\mid$ $m$} handler on each receiving node $B \neq A$ at or before time $t + \Delta$.
This reflects a setting where a network-level attacker can delay any message for up to~$\Delta$ time.

\subsubsection{Proof-of-Work} \label{sec:voting:model:pow}

We adopt the continuous time model for proof-of-work from prior analyses of \nc~\cite{sompolinsky2015SecureHighrate, ren2019AnalysisNakamoto, dembo2020EverythingRace, li2021CloseLatency}:
the environment has a proof-of-work mechanism $P_\lambda$ that activates nodes at random times.
In stochastic terms, $P_\lambda$ is a homogeneous Poisson process with rate $\lambda$.
For this presentation, it is instructive to describe $P_\lambda$ as a stochastic clock, where
the delays between consecutive ticks are independent and exponentially distributed with rate parameter~$\lambda$. %
We define $\bar{d} = \nicefrac1\lambda$ for the expected delay between consecutive ticks.

Let $t_i$ denote the time of the $i$-th tick.
The environment activates exactly one node per tick at the corresponding time~$t_i$ by invoking the \event{activate} handler.
We call this invocation the $i$-th activation.

In practice, proof-of-work is implemented using hash-based cryptographic puzzles that have to be solved by trial-and-error.
Attackers cannot choose \emph{which} node finds the next solution.
In our model, the environment may choose which node is activated at each tick.
Thereby, we eliminate one source of randomness and replace it with a worst-case realization.

\subsection{Protocol \protAgree} \label{sec:voting:proto}

We specify~\protAgree{} in Algorithm~\ref{alg:agree}.
During initialization, each node sets the preferred value (ln.~\ref{lv:initpref}) and initializes the vote counters to zero for all values (ln.~\ref{lv:initcount}).
Whenever a node is activated by the environment (through proof-of-work), it broadcasts a vote for its preferred value and updates the vote counter accordingly (ln.~\ref{lv:vote}-\ref{lv:incrloc}).
All nodes count the received votes and update their preference to the value with the highest counter (ln.~\ref{lv:incr}-\ref{lv:pref}).
After receiving the $k$-th vote for a value~$x$, the nodes terminate returning~$x$ (ln.~\ref{lv:term}).

\begin{algorithm}[t]
  \caption{Agreement protocol \protAgree} \label{alg:agree}
  \begin{algorithmic}[1]
    \Upon{init $|$ $x$} \label{lv:init}
    \State $p \gets x$ \Comment{preferred value} \label{lv:initpref}
    \For{$y \in \mathbb N$}
    $\operatorname{votes}(y) \gets 0$ \label{lv:initcount}
    \EndFor
    \EndUpon

    \algskip

    \Upon{activation} \Comment{proof-of-work, see Sec.~\ref{sec:voting:model:pow}}
    \State \Call{broadcast}{vote $p$} \label{alg:agree:broadcast} \label{lv:vote}
    \State $\operatorname{votes}(p) \gets \operatorname{votes}(p) + 1$ \label{lv:incrloc}
    \EndUpon

    \algskip

    \Upon{deliver $|$ vote $x$}
    \State $\operatorname{votes}(x) \gets \operatorname{votes}(x) + 1$ \label{lv:incr}
    \If{$\operatorname{votes}(x) > \operatorname{votes}(p)$}
    $p \gets x$ \label{lv:pref}
    \EndIf
    \EndUpon

    \algskip

    \UponC{$\exists x \mid \operatorname{votes}(x) \geq k$} \label{lv:termcond}
    \State \Call{terminate}{$x$} \label{lv:term}
    \EndUponC
  \end{algorithmic}
\end{algorithm}

Similar protocols exist in the literature on Byzantine Fault Tolerance (BFT).
Typically, they rely on round-based execution where each node is activated once per round.
Such rate-limiting requires strong identification of all participating nodes.
We eliminate this requirement by sourcing the activations from proof-of-work~\cite{abraham2017BlockchainConsensus}.

\subsection{Security Against Network-Level Attacks} \label{sec:voting:safe0}

Good agreement protocols ensure that all nodes terminate (liveness) and that they terminate with the same value (safety)~\cite{garay2020SoKConsensus}.
We analyze how the choice of~$k$ affects the liveness and safety of~\protAgree{}.
Our analyses depend on the parameters of the environment, i.\,e., the maximum propagation delay~$\Delta$ and the proof-of-work rate~$\lambda$.
For now, we assume that all nodes follow the protocol.
We analyze adversarial behavior in Section~\ref{sec:voting:alpha}.

Liveness is straightforward.
For $n$ nodes, there are at most $n$ different initialization values.
After $n \cdot k$ activations, at time $t_{n \cdot k} + \Delta$ more specifically, there must be one value for which each of the nodes has at least $k$ votes.
This implies termination of all nodes.

Safety is straightforward for the special case $\Delta = 0$.
Message broadcast and the corresponding message delivery happen at the same time.
At the first activation $t_1$, the activated node broadcasts a vote for its preferred value.
The other nodes receive it immediately and update their preferred value accordingly.
From then on, all nodes stay synchronized as they keep voting for the same value.
At time~$t_k$ all nodes have $k$ votes for the same value and terminate.

For $\Delta > 0$, safety becomes more involved.
The broadcast of one vote might overlap with the next activation,
two votes might cancel out, and
the synchronization can be delayed.
In order to analyze these failure modes, we first define our notion of safety.

\begin{definition}[Failure]
  We say inconsistent termination or inconsistency failure, if an execution results in two or more nodes terminating with different values.
\end{definition}

\begin{definition}[Safety] \label{def:safety}
  \protAgree{} is $\varepsilon$-safe for a given parametrization~($\Delta, \lambda$) of the environment,
  if the probability that~\protAgree{} executed in the environment results in inconsistent termination is at most~$\varepsilon$.
  Probabilities are taken over the realization of the stochastic clock~$P_\lambda$.
  Initialization values, message propagation delays, and choice of activated nodes are set to the worst case for the given realization.
\end{definition}

To show $\varepsilon$-safety, we first argue that certain realizations $\{t_i\}$ of the random activation times imply synchronization on a single preferred value.
We then measure the probability of such realizations.

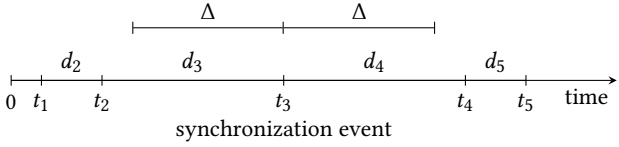
\begin{figure}
  \centering
  \def\ta{0.05}
  \def\tb{0.15}
  \def\tc{0.45}
  \def\td{0.75}
  \def\te{0.85}
  \def\dd{0.25}
  \begin{tikzpicture}[x=.95\linewidth]
    \draw[->] (0,0) -- (1,0) node[below left] {time};
    \draw (0, -2pt) node[below] {0} -- +(0, 4pt);
    \draw (\ta, -2pt) node[below] {$t_1$} -- +(0, 4pt);
    \draw (\tb, -2pt) node[below] {$t_2$} -- +(0, 4pt);
    \draw (\tc, -2pt) node[below, align=center] {$t_3$\\synchronization event} -- +(0, 4pt);
    \draw (\td, -2pt) node[below] {$t_4$} -- +(0, 4pt);
    \draw (\te, -2pt) node[below] {$t_5$} -- +(0, 4pt);
    \draw (0,0) -- %
       (\ta, 0) -- node[midway, above] {$d_2$}
       (\tb, 0) -- node[midway, above] {$d_3$}
       (\tc, 0) -- node[midway, above] {$d_4$}
       (\td, 0) -- node[midway, above] {$d_5$}
       (\te, 0);
     \draw[|-|] (\tc, .7) -- node[above, midway] {$\Delta$} +(-\dd, 0);
     \draw[ -|] (\tc, .7) -- node[above, midway] {$\Delta$} +( \dd, 0);
  \end{tikzpicture}
  \caption{%
    Activation times $t_i$ and activation delays $d_i$ for one particular realization of $P_\lambda$.
    The third activation is a synchronization event.
  }
  \label{fig:sync:timeline}
\end{figure}

\begin{definition}[Activation delay] \label{def:delay}
  Let $\{t_i\}$ be a realization of the random activation times and let $t_0 = -\infty$.
  We define $d_i = t_i - t_{i-1}$. We call $d_i$ the $i$-th activation delay.
\end{definition}

\begin{definition}[Synchronization event] \label{def:sync}
  We say that $t_i$ is a synchronization event, if both $d_i > \Delta$ and $d_{i+1} > \Delta$.
\end{definition}

Figure~\ref{fig:sync:timeline} illustrates these definitions for one realization.
Similar concepts were previously called
\emph{uniquely successful round} \cite{garay2015BitcoinBackbone},
\emph{convergence opportunity} \cite{pass2017AnalysisBlockchain}, and
\emph{loner} \cite{dembo2020EverythingRace, li2021CloseLatency}.

\begin{proposition} \label{prop:sync}
  If~$t_i$ is a synchronization event, then
  all running nodes prefer the same value at time $t_i + \Delta$.
\end{proposition}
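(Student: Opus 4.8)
The plan is to track how the vote counters $\operatorname{votes}(\cdot)$ and the preferred value $p$ evolve at each running node across the window $[t_i, t_i+\Delta]$, exploiting the two gaps $d_i > \Delta$ and $d_{i+1} > \Delta$ guaranteed by the synchronization event (Definition~\ref{def:sync}). The structural fact I would establish first is an invariant of Algorithm~\ref{alg:agree}: at every node and at all times, the preferred value attains the maximum counter, i.e.\ $\operatorname{votes}(p) = \max_y \operatorname{votes}(y)$. This holds at initialization (all counters zero) and is preserved both by the activation handler (which increments $\operatorname{votes}(p)$ itself) and by the delivery handler (which switches $p$ only to a value whose counter has just become strictly largest). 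A short induction over the sequence of event-handler invocations makes this precise.

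Using the gap \emph{before} the event, I would argue that the counters are synchronized across all running nodes just before $t_i$. Because $d_i = t_i - t_{i-1} > \Delta$, every vote broadcast at an activation $t_1, \dots, t_{i-1}$ is delivered to every node no later than $t_{i-1} + \Delta < t_i$, and no vote is broadcast in the open interval $(t_{i-1}, t_i)$ since activations coincide with ticks. Hence just before $t_i$ each running node has counted exactly the votes from $t_1, \dots, t_{i-1}$, so all running nodes hold identical counters. Combined with the invariant, the node $A$ activated at $t_i$ prefers some value $x$ with $\operatorname{votes}(x) = M := \max_y \operatorname{votes}(y)$, and this same $M$ and the same maximizing status of $x$ hold at every running node, even though their preferred values may differ on ties.

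Using the gap \emph{after} the event, I would show that $A$'s vote turns $x$ into the strict unique maximum everywhere. At $t_i$, node $A$ broadcasts a vote for $x$ and raises its own $\operatorname{votes}(x)$ to $M+1$. Since $d_{i+1} = t_{i+1} - t_i > \Delta$, the next activation occurs after $t_i + \Delta$, so no further vote is broadcast during $(t_i, t_i + \Delta]$, while $A$'s vote is delivered to every still-running node by $t_i + \Delta$. Thus at $t_i + \Delta$ every running node has $\operatorname{votes}(x) = M + 1$ while $\operatorname{votes}(y) \le M$ for all $y \ne x$. The delivery rule then forces each such node to adopt $p = x$ (the update $M+1 > M$ fires), and by the invariant $x$ stays preferred. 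Every running node therefore prefers the common value $x$ at $t_i + \Delta$.

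The main obstacle is the strictness of the preference-update rule: ties do not trigger a switch, so identical counters alone do not pin down a common preference. The argument must produce a value that is strictly ahead at every node simultaneously, which is exactly what the event delivers --- the pre-gap $d_i > \Delta$ equalizes the counters so the voter's choice is a global maximizer, and the post-gap $d_{i+1} > \Delta$ isolates the single vote that pushes that value one ahead of all others before any competing vote can interfere. Care is also needed with nodes that terminate inside the window: these simply leave the running set, so restricting the claim to nodes running at $t_i + \Delta$ keeps it intact, and the activator's immediate self-increment versus the others' delivery-time increment is harmless because no other event occurs in the interval.
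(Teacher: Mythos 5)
Your proof is correct and follows essentially the same route as the paper's: the pre-gap $d_i>\Delta$ equalizes the vote counters so that any disagreement in preferences is a tie among maximizers, the vote at $t_i$ breaks that tie, and the post-gap $d_{i+1}>\Delta$ ensures no competing vote interferes before $t_i+\Delta$. The only difference is that you make explicit (and prove by induction) the invariant $\operatorname{votes}(p)=\max_y\operatorname{votes}(y)$, which the paper's proof uses implicitly when it asserts that differing preferences imply a tie.
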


\begin{proof}
  Let $d_i > \Delta$ and $d_{i+1} > \Delta$.
  This restriction imposes the following order of events,
  \begin{align}
    t_{i-1} < t_{i-1} + \Delta < t_i < t_i + \Delta < t_{i+1} \ .
  \end{align}
  Observe that the first $i-1$ votes are fully propagated at the time of the $i$-th activation.
  Just before the $i$-th activation, all nodes see the same votes.
  The order of votes does not matter.
  If nodes prefer different values, then there is a tie between these preferred values.

  One node is activated at time $t_i$ and votes for its preferred value~$x$.
  The other nodes receive the vote until~$t_i + \Delta$.
  Receiving nodes that prefer~$x$ leave their preference unchanged.
  Receiving nodes that prefer a different value adopt~$x$ because the new vote is breaking the tie.
  Activation $i+1$ happens later, thus there is no other vote that can interfere.
\end{proof}

\begin{proposition} \label{prop:success}
  Let $\{t_i\}$ be a realization where the
  first synchronization event happens before~$t_{2k}$.
  Then all nodes running~\protAgree{} return the same value.
\end{proposition}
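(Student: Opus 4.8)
The plan is to pin down the first synchronization event and use it to show that, from that moment on, every new vote reinforces a single value, while a counting argument rules out any competing value having already reached the termination threshold. Let $t_j$ denote the first synchronization event; the hypothesis $t_j < t_{2k}$ and the strict monotonicity of the activation times give $j \le 2k-1$, so at most $j \le 2k-1$ votes have been cast up to and including the $j$-th activation. By Proposition~\ref{prop:sync}, at time $t_j + \Delta$ all running nodes prefer a common value, call it $x$, and---because $d_j > \Delta$ and $d_{j+1} > \Delta$---all of the first $j$ votes have been delivered everywhere while no further activation has yet occurred. Hence at $t_j+\Delta$ every running node holds identical vote counters, summing to at most $2k-1$, with $x$ attaining the maximum.

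First I would show that $x$ is the only value that can ever accumulate $k$ votes. The counting step is the clean crux: since the counters sum to at most $2k-1$ and $x$ is (weakly) maximal, any competing value $y \neq x$ has strictly fewer than $k$ votes at $t_j + \Delta$, for otherwise $x$ and $y$ would each need $k$, forcing at least $2k$ votes in total. I would then argue that every activation after $t_j$ produces a vote for $x$: once all running nodes prefer $x$, an incoming $x$-vote only increments the already-leading counter and cannot dislodge $x$ (the preference update in Algorithm~\ref{alg:agree} switches only on a strict improvement), so the common preference is self-reinforcing and no node is ever activated while preferring a different value. Consequently the counter of each $y \neq x$ is frozen at its sub-$k$ value for all time, at every node: before $t_j+\Delta$ a node's tally of $y$ is a subset of the first $j$ votes, and after $t_j+\Delta$ only $x$ receives fresh votes.

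From here the conclusion follows: since no value other than $x$ ever reaches the threshold $k$ at any node, the termination condition of \protAgree{} can only fire for $x$, so any node that terminates returns $x$; combined with the liveness argument already given (all nodes terminate after $n \cdot k$ activations), every node terminates, and therefore all nodes return the same value $x$. I expect the main obstacle to be the persistence argument of the middle paragraph---formalizing that the agreed preference $x$ is maintained by all running nodes after $t_j + \Delta$ (monotonicity of the leading counter under $x$-only votes, including nodes whose synchronization is staggered within a $\Delta$ window)---together with the bookkeeping needed to exclude an early, pre-synchronization termination on some $y \neq x$, which the sub-$k$ counting bound handles but which must be verified at every intermediate time rather than only at the snapshot $t_j + \Delta$.
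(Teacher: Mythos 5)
Your proposal is correct and follows essentially the same route as the paper's proof: bound the total number of votes by $2k-1$ at the first synchronization event, invoke Proposition~\ref{prop:sync} to get a common preference, and observe that subsequent votes only reinforce that preference while all other counters stay below $k$. The two ``obstacles'' you flag at the end are dispatched in the paper by the single observation that at the synchronization event all nodes are aware of the same (fewer than $2k$) votes, so either every node has already terminated with the same value or none has.
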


\begin{proof}
  Two nodes terminating with different values requires at least~$2k$ votes (Alg.~\ref{alg:agree}, l.~\ref{lv:termcond}).
  Let $t_i$ denote the first synchronization event in $\{t_i\}$ and let $i < 2k$.
  At time~$t_i$, less than~$2k$ votes exist and all nodes are aware of all existing votes.
  If one node has terminated returning~$x$, then all nodes have terminated returning~$x$.
  Otherwise, all nodes are still running.
  By Proposition~\ref{prop:sync} all nodes prefer the same value~$y$ at time~$t_i+\Delta$.
  Nodes activated at or after $t_{i+1}$ will vote for~$y$ until all nodes terminate returning~$y$.
\end{proof}

Proposition~\ref{prop:success} provides a sufficient condition for consistency which depends on the realization of the stochastic clock.
To measure the space of realizations satisfying this condition, we construct a discrete Markov chain with three states.
The random state transitions happen at the ticks of the stochastic clock~$P_\lambda$.
Before the first synchronization event, we use two states to track whether the last delay was greater than~$\Delta$ (state ~$s_2$) or not (state~$s_1$).
The model enters the terminal state~$s_3$ if two consecutive delays are greater than~$\Delta$.
Since $d_1 = \infty$ by Definition~\ref{def:delay}, we set the start state to~$s_2$.
By construction, the Markov chain is in state~$s_3$ after $i$~transitions if and only if there was a synchronization event at or before time~$t_i$.
Table~\ref{tab:safe0} describes the states and transitions more formally.

\begin{table}
  \caption{Discrete state transitions modelling Proposition~\ref{prop:success}.}
  \label{tab:safe0}
  \setlength\tabcolsep{4pt}
  \begin{tabularx}{\columnwidth}{cl *2{>{\centering\arraybackslash}X}}
    \toprule
    \multicolumn{2}{c}{state after $i$ transitions}
                               & \multicolumn{2}{c}{state after $i \to i+1$} \\
                               \cmidrule(r){1-2}
                               \cmidrule(l){3-4}
    \# & interpretation & $d_{i+2} \leq \Delta$ & $d_{i+2} > \Delta$ \\
    \midrule
    $s_1$ & $d_{i+1} \leq \Delta~\wedge$ no synchronization event & $s_1$ & $s_2$ \\
    $s_2$ & $d_{i+1} > \Delta~\wedge$ no synchronization event    & $s_1$ & $s_3$ \\
    $s_3$ & $\exists\,j \leq i$ $\mid$ $t_j$ is synchronization event & $s_3$ & $s_3$ \\
    \bottomrule
  \end{tabularx}
\end{table}

For $i>1$, the activation delays $d_{i}$ are independent and exponentially distributed with rate~$\lambda$ by the definition of the stochastic clock (Sec.~\ref{sec:voting:model:pow}).
The probability that $d_i \leq \Delta$ is $1 - e^{-\lambda\Delta}$.
This gives us the Markov chain depicted in Figure~\ref{fig:safe0:mc}.%

\begin{figure}
  \begin{center}
    \def\p{e^{-\lambda \Delta}}
    \begin{tikzpicture}[x=.3\linewidth]
      \small
      \node[draw, circle] (s1) at (0, 0) {$s_1$};
      \node[draw, circle] (s2) at (1, 0) {$s_2$};
      \node[draw, circle] (s3) at (2, 0) {$s_3$};
      \draw[->] (s2) to[bend  left=17] node[midway, above] {$\p$}     (s3);
      \draw[->] (s2) to[bend  left=17] node[midway, below] {$1 - \p$} (s1);
      \draw[->] (s1) to[bend  left=17] node[midway, above] {$\p$}     (s2);
      \draw[->] (s3) to[out= 30,in=330,looseness=7] node[midway, right] {$1$} (s3);
      \draw[->] (s1) to[out=210,in=150,looseness=7] node[midway, left] {$1-\p$} (s1);
      \draw[<-] (s2) to +(0,0.7) node[fill=white] {start};
    \end{tikzpicture}
  \end{center}
  \caption{%
    Graphical representation of the Markov chain.
  }
  \label{fig:safe0:mc}
\end{figure}
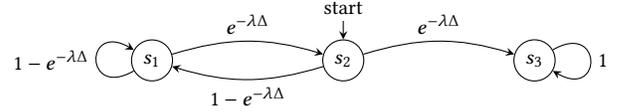

\begin{proposition} [Safety] \label{prop:safe0}
  \protAgree{} is $b_0(\lambda, \Delta, k)$-safe for
  \def\p{e^{-\lambda \Delta}}
  \begin{align}
    \textbf M(\lambda, \Delta) &=
    \begin{pmatrix}
      1 - \p & \p & 0  \\ %
      1 - \p & 0  & \p \\ %
           0 & 0  & 1  \\ %
    \end{pmatrix} ,\\
    \textbf v &=
    \begin{pmatrix}
      0 & 1 & 0 \\
    \end{pmatrix} ,\text{ and} \\
    b_0(\lambda, \Delta, k) &= 1 - \left(
      \textbf v \times \textbf M(\lambda, \Delta)^{2k - 1}
    \right)[3]\, ,
  \end{align}
  where $[3]$ denotes selection of the third element.
\end{proposition}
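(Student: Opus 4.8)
The plan is to combine the \emph{deterministic} sufficient condition of Proposition~\ref{prop:success} with a probabilistic measurement of how often that condition holds, computed through the three-state Markov chain of Figure~\ref{fig:safe0:mc}. Proposition~\ref{prop:success} tells us that \emph{any} realization of the stochastic clock whose first synchronization event occurs before $t_{2k}$ yields consistent termination, and, importantly, this implication holds for every admissible choice of initialization values, propagation delays, and activated nodes. Hence, in the worst-case sense of Definition~\ref{def:safety}, inconsistency can only arise on realizations with \emph{no} synchronization event among $t_1,\dots,t_{2k-1}$, and it suffices to upper-bound the probability of that event over the randomness of $P_\lambda$.

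First I would justify that the Markov chain exactly tracks the occurrence of an early synchronization event. Because $d_1 = \infty > \Delta$ by Definition~\ref{def:delay}, the chain legitimately starts in $s_2$, encoded by $\mathbf v = (0,1,0)$. For $i \geq 2$ the delays $d_i$ are i.i.d.\ exponential with rate~$\lambda$, so a single delay exceeds $\Delta$ with probability $p = e^{-\lambda\Delta}$ and lies below with probability $1-p$; these are precisely the entries of $\mathbf M(\lambda,\Delta)$ read off from Table~\ref{tab:safe0}. A short induction on the transition rule then confirms the claim stated just before the proposition: the chain sits in the absorbing state $s_3$ after $i$ transitions if and only if some $t_j$ with $j \leq i$ is a synchronization event.

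With the correspondence in place, the core computation is routine. The probability that a synchronization event has occurred at or before $t_{2k-1}$ --- equivalently, before $t_{2k}$ in the wording of Proposition~\ref{prop:success} --- equals the probability of absorption in $s_3$ within $2k-1$ transitions, which by the standard row-vector/transition-matrix identity is $\left(\mathbf v \times \mathbf M(\lambda,\Delta)^{2k-1}\right)[3]$. Every such realization terminates consistently, so the inconsistency probability is at most its complement, $b_0(\lambda,\Delta,k) = 1 - \left(\mathbf v \times \mathbf M(\lambda,\Delta)^{2k-1}\right)[3]$, and Definition~\ref{def:safety} then yields $b_0$-safety.

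The main obstacle I anticipate is purely bookkeeping: pinning down the off-by-one between synchronization events, clock ticks, and Markov transitions so that the exponent $2k-1$ (and not $2k$ or $2k-2$) is the correct one, and being careful that it is the start state $s_2$ --- rather than $s_1$ --- that absorbs the boundary convention $d_1 = \infty$. The only genuinely model-level point to verify is that the worst-case quantification over the non-clock randomness in Definition~\ref{def:safety} is compatible with the per-realization guarantee of Proposition~\ref{prop:success}; since that proposition already holds for arbitrary such choices, no further probabilistic argument over them is required, and the Markov-chain bound transfers directly.
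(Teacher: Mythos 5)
Your proposal is correct and follows essentially the same route as the paper: invoke Proposition~\ref{prop:success} to reduce inconsistency to the absence of an early synchronization event, then read off the probability of that event from the three-state Markov chain started in $s_2$ via $\left(\mathbf v \times \mathbf M(\lambda,\Delta)^{2k-1}\right)[3]$. Your handling of the indexing (start state $s_2$ from $d_1=\infty$, exponent $2k-1$ matching ``before $t_{2k}$'') is exactly the bookkeeping the paper relies on, just spelled out more explicitly than in its terse proof.
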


\begin{proof}
  $\textbf M(\lambda, \Delta)$ describes the Markov chain depicted in Figure~\ref{fig:safe0:mc} in matrix form.
  We assign vector $\textbf v$ to initialize the Markov model in state $s_2$.
  The third element of the result of $\textbf v \times \textbf M(\lambda, \Delta)^{2k - 1}$ describes the probability that the Markov model is in state $s_3$ after $2k - 1$ random transitions.
  Our claim follows by the construction of the Markov chain, Definition~\ref{def:safety}, and Proposition~\ref{prop:success}.
\end{proof}

Observe that $b_0$ depends only on the vote threshold~$k$ and the product of $\lambda$ and $\Delta$.
It is instructive to interpret $\lambda$ as the inverse of the expected activation delay~$\bar{d}$ and use the synchrony parameter~$\Delta$ as a unit of time.
Figure~\ref{fig:safe0:bound} visualizes the bound $b_0(\lambda, \Delta, k)$ for different combinations of $\bar{d}$ (as multiple $\Delta$) on the x-axis and $k$ on the y-axis.
Both parameters have a positive effect on the safety of \protAgree{}.

Proposition~\ref{prop:safe0} guarantees $\varepsilon$-safety for~\protAgree{} against strong network-level attacks.
We now extend the argument to adversarial voting.

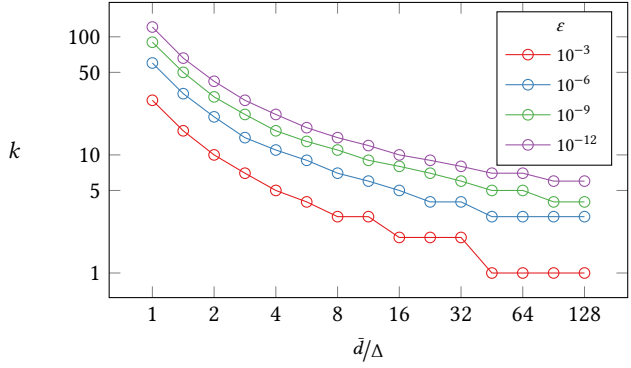
\begin{figure}
  \begin{tikzpicture}
    \def\data{voting-safe0-bound.csv}
    \def\xcol{x}
    \begin{axis}[
      table/col sep=comma,
      width=\linewidth,
      height=0.65\linewidth,
      xmode=log, ymode=log,
      xtick={1,2,4,8,16,32,64,128},
      xticklabels={1,2,4,8,16,32,64,128},
      ytick={1, 5, 10, 50, 100},
      yticklabels={1, 5, 10, 50, 100},
      xlabel={$\nicefrac{\bar{d}}{\Delta}$},
      ylabel={$k$},
      y label style={at={(axis description cs:-0.21,.5)},rotate=-90,anchor=west},
      legend cell align=left,
      legend pos= north east,
      legend style={font=\footnotesize},
      cycle list/Set1-4,
      cycle multiindex* list={ mark=o\nextlist Set1-4\nextlist }
      ]
      \addlegendimage{empty legend}
      \addlegendentry{$\varepsilon$};
      \foreach \eps in {3, 6, 9, 12} {
        \addplot table[y=k.\eps, x=\xcol] {\data};
        \addlegendentryexpanded{$10^{-\eps}$};
      }
    \end{axis}
  \end{tikzpicture}
  \caption{Minimum $k$, such that \protAgree{} is $\varepsilon$-safe by Proposition~\ref{prop:safe0}.}
  \label{fig:safe0:bound}
\end{figure}

\subsection{Security Against Malicious Voting} \label{sec:voting:alpha}

We now consider attackers who can cast votes.
While equivocation is a major concern in the setting with identified nodes, the use of proof-of-work in $\protAgree{}$ completely removes this issue~\cite{abraham2017BlockchainConsensus}:
in practice, every vote is authenticated with a costly solution to a computational puzzle~\cite{dwork1993PricingProcessing}.
Therefore, the only remaining attack strategy is to withhold votes to release them later.
We address this next.

\subsubsection{Attacker Votes}

We assume that the attacker controls at most~$\alpha$ of the total proof-of-work capacity.%
\footnote{The analysis generalizes to multiple attackers: we conservatively assume that all attackers coordinate perfectly and sum up their shares.}
We model this by splitting the stochastic clock $P_{\lambda}$ in two independent parts~$P_A$ and~$P_D$.
The clock~$P_A$ ticks at rate $A = \alpha \cdot \lambda$.
With each tick, the attacker gains one \emph{attacker vote} which can be withheld arbitrarily before being sent.
The clock~$P_D$ ticks at rate $D = (1-\alpha) \cdot \lambda$ and results in activations of defender nodes which follow the protocol as specified.

Recall that the sum of multiple Poisson processes is another Poisson process with cumulated rate.
The probability that an individual tick of $P_{\lambda} = P_{A + D}$ results in an attacker vote is $\alpha$.

\subsubsection{Malicious Voting}
We study withholding attacks by considering two phases.
In the \textbf{balancing} phase nodes are not yet synchronized on the same preferred value.
Recall that without malicious voting, the nodes would synchronize at the first synchronization event (see Prop.~\ref{prop:sync}).
A vote-withholding attacker can prevent synchronization by releasing withheld votes around synchronization events.
The balancing phase continues while the attacker can balance synchronization events with his stock of withheld votes.
If the attacker does not release a withheld vote around a synchronization event, e.\,g.~because his stock is empty, the nodes synchronize.
This is when the attack transitions to the \emph{catching-up} phase.

During the \textbf{catching-up} phase, all nodes prefer the same value.
With each tick of the stochastic clock~$P_D$, the nodes cast one vote for this value and thereby reinforce the synchronization.
However, the attacker can destroy the synchronization by releasing sufficiently many votes for a different value.
If this happens, the attack transitions back to the \emph{balancing} phase.

Both phases can be characterized with an integer depth.
In the \emph{balancing} phase it matters how many votes are currently withheld by the attacker.
In the \emph{catching-up} phase it matters how many votes the attacker has to cast in order to destroy the synchronization.
Our attack model tracks these depths in a single \textbf{margin} variable $m$.
Positive $m$ represent withheld votes during \emph{balancing} and negative~$m$ represent number of votes to be \emph{caught-up} (see Fig.~\ref{fig:margin}).

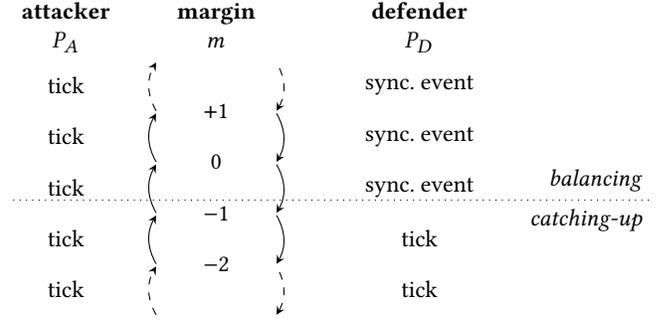
\begin{figure}
  \def\l{-2}
  \def\r{2.7}
  \begin{tikzpicture}[y=5ex]
    \node[align=center, inner sep=0, outer sep=0] at (\r, 2.6) {\textbf{defender}\\$P_D$\strut};
    \node[align=center, inner sep=0, outer sep=0] at (\l, 2.6) {\textbf{attacker}\\$P_A$\strut};
    \node[align=center, inner sep=0, outer sep=0] at ( 0, 2.6) {\textbf{margin}\\$m$\strut};
    \foreach \m in {+1, 0, -1, -2} \node at (0, \m) {$\m$};
    \foreach \m in {1, -3} {
      \draw[->, dashed] (-0.8, \m) to[bend left] +(0, 0.95);
      \draw[<-, dashed] (0.8, \m) to[bend right] +(0, 0.95);
    }
    \foreach \m in {0, -1, -2} {
      \draw[->] (-0.8, \m) to[bend left] +(0, 0.95);
      \draw[<-] (0.8, \m) to[bend right] +(0, 0.95);
    }
    \foreach \m in {1, 0, -1, -2, -3} \node[] at ([shift={(0, .5)}] \l, \m) {tick};
    \foreach \m in {2, 1, 0} \node[] at ([shift={(0, -.5)}] \r, \m) {sync.~event};
    \foreach \m in {-1, -2} \node[] at ([shift={(0, -.5)}] \r, \m) {tick};
    \draw[dotted] (-2.7, -0.75) -- +(\linewidth, 0)
      node[above left] {\emph{balancing}}
      node[below left] {\emph{catching-up}};
  \end{tikzpicture}
  \caption{Security of $\protAgree$: attacker's margin and the transition between \emph{balancing} and \emph{catching-up} phases.}
  \label{fig:margin}
\end{figure}

\subsubsection{Markov Chain Model}

We quantify the safety of~\protAgree{} against malicious voting by measuring the space of realizations of the joint clock $P_\lambda$ where withholding and propagation delays enable inconsistent termination.
We generalize the Markov chain model of Section~\ref{sec:voting:safe0} to include the states for different margins $m$.
The new state space is $(m,s) \in \mathbb{Z} \times \{\bot, \top\}$.
To track the occurrence of synchronization events, we set $s=\top$ if and only if the last activation delay was greater~$\Delta$.
The initial state is $(0, \top)$: zero votes are withheld and $s=\top$ since $d_1 > \Delta$ by Definition~\ref{def:delay}.

As before, transitions happen at each tick.
For attacker votes (probability $\alpha$), we increment $m$.
Depending on the phase, the attacker withholds (increasing the stock of withheld votes) or catches up by one; both map to the same transition.
For defender activations (probability $1 - \alpha$), we distinguish the two phases.
In the \emph{balancing} phase ($m \geq 0$), we only decrement $m$ if the tick is a synchronization event.
In the \emph{catching-up} phase ($m < 0$), we always decrement $m$.
Figure~\ref{fig:safe:mc} illustrates the state transitions and transition probabilities.

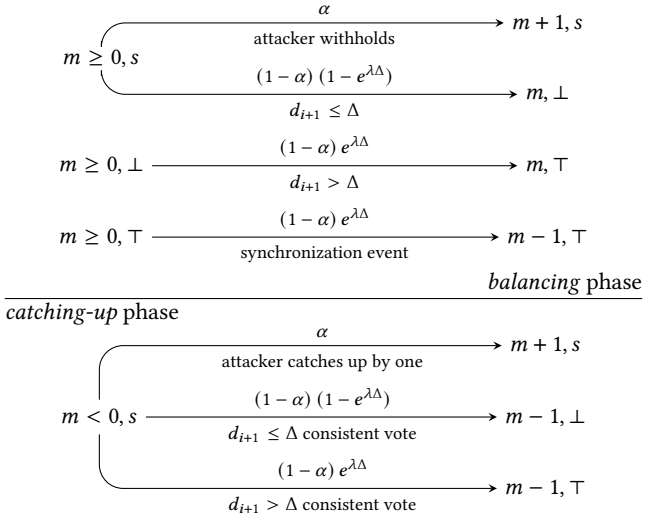
\begin{figure}
  \centering
  \tikzstyle{src}=[]
  \tikzstyle{dst}=[]
  \tikzstyle{transition}=[->, draw, rounded corners=1em]
  \tikzstyle{prob}=[anchor=south, font=\footnotesize, xshift=-0.35\linewidth]
  \tikzstyle{intuition}=[anchor=north, font=\footnotesize, xshift=-0.35\linewidth]
  \def\p{e^{\lambda\Delta}}
  \def\r{(\alpha + \lambda)}
  \begin{tikzpicture}[x=0.7\linewidth,y=-7ex]
    \node[src] (a) at (0,.5) {$m \geq 0, s$};
    \node[dst] (aa) at (1,0) {$m+1, s$};
    \node[dst] (ab) at (1,1) {$m, \bot$};
    \path[transition] (a) |- (aa);
    \path[transition] (a) |- (ab);
    \node[prob] at (aa) {$\alpha$};
    \node[prob] at (ab) {$(1-\alpha)\,(1 - \p)$};
    \node[intuition] at (aa) {attacker withholds};
    \node[intuition] at (ab) {$d_{i+1} \leq \Delta$};

    \node[src] (b) at (0,2) {$m \geq 0, \bot$};
    \node[dst] (ba) at (1,2) {$m, \top$};
    \path[transition] (b) -- (ba);
    \node[prob] at (ba) {$(1-\alpha)\,\p$};
    \node[intuition] at (ba) {$d_{i+1} > \Delta$};

    \node[src] (c) at (0,3) {$m \geq 0, \top$};
    \node[dst] (ca) at (1,3) {$m-1, \top$};
    \path[transition] (c) -- (ca);
    \node[prob] at (ca) {$(1-\alpha)\,\p$};
    \node[intuition] at (ca) {synchronization event};
  \end{tikzpicture}

  \vspace{-1ex}
  \strut\hfill\emph{balancing} phase
  \hrule
  \emph{catching-up} phase \hfill\strut
  \vspace{-1ex}

  \begin{tikzpicture}[x=0.7\linewidth,y=-7ex]
    \node[src] (d) at (0,5) {$m < 0, s$};
    \node[dst] (da) at (1,4) {$m + 1, s$};
    \node[dst] (db) at (1,5) {$m - 1, \bot$};
    \node[dst] (dc) at (1,6) {$m - 1, \top$};
    \path[transition] (d) |- (da);
    \path[transition] (d) -- (db);
    \path[transition] (d) |- (dc);
    \node[prob] at (da) {$\alpha$};
    \node[prob] at (db) {$(1-\alpha)\,(1-\p)$};
    \node[prob] at (dc) {$(1-\alpha)\,\p$};
    \node[intuition] at (da) {attacker catches up by one};
    \node[intuition] at (db) {$d_{i+1} \leq \Delta$ ~~consistent vote};
    \node[intuition] at (dc) {$d_{i+1} > \Delta$ ~~consistent vote};
  \end{tikzpicture}
  \caption{%
    Generalized Markov chain model for attackers who can send and withhold votes.
    Transition probabilities are annotated above the arrows and interpretations below.
  }
  \label{fig:safe:mc}
\end{figure}

\subsubsection{Numerical Solution}

In principle, within $l$ steps, the model can reach any state with $-l \leq m \leq l$.
Calculating the exact state probabilities after $l$ transitions requires us to raise a square matrix with $n = 2(2l +1)$ rows to the power of $l$.
Each matrix multiplication is $O\left(n^{2.8}\right)$~\cite{strassen1969GaussianElimination}.
Thus, the analysis is infeasible for larger~$l$.

We set a cut-off at $m = \pm 25$ to make the problem tractable.
We assume that an attacker who manages to withhold 25 votes during the \emph{balancing} causes inconsistent termination.
Similarly, an attacker lagging behind 25 votes in the \emph{catching-up} phase cannot catch up at all.
With these assumptions, the number of states is bounded by $102$ and the matrix multiplications stay tractable.
Using such cut-offs is common practice in the related literature~\cite{gervais2016SecurityPerformance, sapirshtein2016OptimalSelfish}. %

A second simplification in this model is that it does not track how many votes are cast for each value.
Adding this information would blow up the state space excessively.
We work around this problem by ignoring the termination rule of \protAgree{} and assume that the nodes continue voting forever.
We thus need to rephrase our notions of success and failure for the purpose of this analysis.

Recall that inconsistent commits require at least $2k$ votes. %
We count an execution as successful if all nodes prefer the same value after $2k$ steps.
This is easy to check by inspecting the phase after $2k$ transitions: \emph{catching-up} means success and \emph{balancing} means failure.

We calculate the failure probability of \protAgree{} for different combinations of $\alpha$, $\Delta$, $\lambda = \nicefrac{1}{\bar{d}}$, and $k$ by exponentiation of the probability matrix of the generalized Markov chain model.
We visualize this in Figure~\ref{fig:safe:bound:k}, following the setup of Figure~\ref{fig:safe0:bound}, but with more lines for different assumptions of attacker strength $\alpha$.
As expected, increasing $\alpha$ pushes up the required $k$ for any given failure bound $\varepsilon$ and expected activation delay $\bar{d}$.
For example, assuming a proof-of-work puzzle takes 8 times the maximum propagation delay, while without attacker, $k=3$ were sufficient for $10^{-3}$-safety, $k$ must increase to $k=9$ if an attacker is present and controls 10\,\% of the proof-of-work capacity; or to $k=88$ for 33\,\% attacker strength.

In practice, a protocol designer can adjust the puzzle difficulty and should care about the protocol runtime, to which we turn next.

\begin{figure}
  \begin{tikzpicture}
    \pgfplotsset{set layers}
    \def\data{mc-py/bounded-failure-wide.csv}
    \begin{axis}[
      table/col sep=comma,
      unbounded coords=jump,
      width=\linewidth,
      height=0.65\linewidth,
      xmode=log, ymode=log,
      xtick={1,2,4,8,16,32,64,128},
      xticklabels={1,2,4,8,16,32,64,128},
      ytick={1, 2, 5, 10, 20, 50, 100, 200, 500},
      yticklabels={1, 2, 5, 10, 20, 50, 100, 200, 500},
      xlabel={$\nicefrac{\bar{d}}{\Delta}$},
      ylabel={$k$},
      y label style={at={(axis description cs:-0.21,.5)},rotate=-90,anchor=west},
      cycle list/Set1-3, %
      cycle multi list={ Set1-3\nextlist mark=x,mark=triangle,mark=star,mark=o\nextlist},
      cycle list shift=-11 %
      ]
      \foreach \atk in {10,25,33} {
        \foreach \eps in {1,2,3,4} {
          \addplot table[y=k.\atk.\eps, x=interval] {\data};
          \label{curve.\atk.\eps}
        }
      }
      \begin{pgfonlayer}{axis foreground}
        \node [draw,fill=white, anchor=north east, font=\footnotesize] at (axis description cs: .97, .97) {
            \shortstack[c]{
              $\varepsilon$\\[2pt]
              \ref{curve.25.1} $10^{-1}$ \\
              \ref{curve.25.2} $10^{-2}$ \\
              \ref{curve.25.3} $10^{-3}$ \\
              \ref{curve.25.4} $10^{-4}$
          }};
        \node [draw,fill=white, anchor=south west, font=\footnotesize] at (axis description cs: .03,.03) {
            \shortstack[c]{
              $\alpha$\\[2pt]
              \ref{curve.10.1} $10\,\%$ \\
              \ref{curve.25.1} $25\,\%$ \\
              \ref{curve.33.1} $33\,\%$
          }};
      \end{pgfonlayer}
    \end{axis}
  \end{tikzpicture}
  \caption{%
    Minimal $k$ such that \protAgree{} satisfies the given failure probability bound $\varepsilon$ for a given attacker and expected activation delay $\bar{d}$ as multiple of $\Delta$.
  }
  \label{fig:safe:bound:k}
\end{figure}
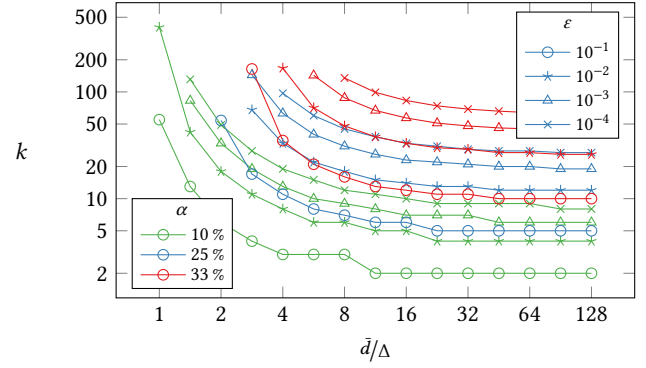

\subsection{Choosing Efficient Parameters} \label{sec:voting:config}

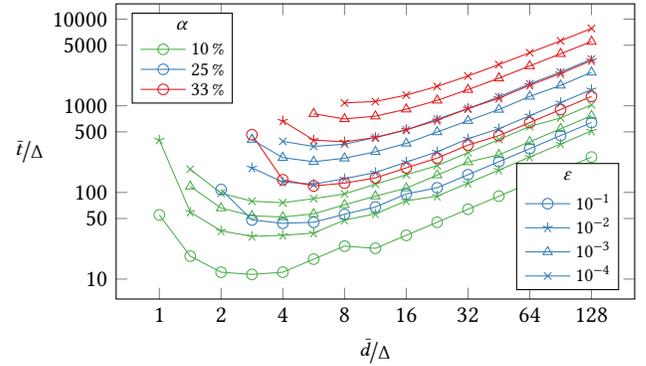
\begin{figure}
  \begin{tikzpicture}
    \def\data{mc-py/bounded-failure-wide.csv}
    \pgfplotsset{set layers}
    \begin{axis}[
      table/col sep=comma,
      unbounded coords=jump,
      width=\linewidth,
      height=0.65\linewidth,
      xmode=log, ymode=log,
      xtick={1,2,4,8,16,32,64,128},
      xticklabels={1,2,4,8,16,32,64,128},
      ytick={10, 50, 100, 500, 1000, 5000, 10000},
      yticklabels={10, 50, 100, 500, 1000, 5000, 10000},
      xlabel={$\nicefrac{\bar{d}}{\Delta}$},
      ylabel={$\nicefrac{\bar{t}}{\Delta}$},
      y label style={at={(axis description cs:-0.21,.5)},rotate=-90,anchor=west},
      cycle list/Set1-3, %
      cycle multi list={ Set1-3\nextlist mark=x,mark=triangle,mark=star,mark=o\nextlist},
      cycle list shift=-11 %
      ]
      \foreach \atk in {10,25,33} {
        \foreach \eps in {1,2,3,4} {
          \addplot table[y=runtime.\atk.\eps, x=interval] {\data};
          \label{curve0.\atk.\eps}
        }
      }
      \begin{pgfonlayer}{axis foreground}
        \node [draw,fill=white, anchor=south east, font=\footnotesize] at (axis description cs: .97, .03) {
            \shortstack[c]{
              $\varepsilon$\\[2pt]
              \ref{curve0.25.1} $10^{-1}$ \\
              \ref{curve0.25.2} $10^{-2}$ \\
              \ref{curve0.25.3} $10^{-3}$ \\
              \ref{curve0.25.4} $10^{-4}$
          }};
        \node [draw,fill=white, anchor=north west, font=\footnotesize] at (axis description cs: .03,.97) {
            \shortstack[c]{
              $\alpha$\\[2pt]
              \ref{curve0.10.1} $10\,\%$ \\
              \ref{curve0.25.1} $25\,\%$ \\
              \ref{curve0.33.1} $33\,\%$
          }};
      \end{pgfonlayer}
    \end{axis}
  \end{tikzpicture}
  \caption{%
    Protocol runtime after choosing the minimal $k$ such that \protAgree{} satisfies the given failure probability bound $\varepsilon$ for a given attacker and expected activation delay $\bar{d}$. Both axes show times as multiples of $\Delta$.
  }
  \label{fig:safe:bound:time}
\end{figure}

The aim here is to guide the choice of protocol parameters to minimize the protocol runtime for given assumptions about the real world.
\protAgree's failure probability depends on the synchrony parameter~$\Delta$, the proof-of-work rate~$\lambda$, the attackers compute power~$\alpha$, and the threshold~$k$.
The protocol operator can choose $\lambda$ and $k$, while $\Delta$ and $\alpha$ are worst-case assumptions.
Safety increases with $k$ or by decreasing $\lambda$.
But both options slow down termination:
either we wait for more votes or we wait longer for each vote.

Recall that the protocol runtime is stochastic.
Termination requires $k$ votes for the same value and thus at least $k$ activations.
The time of the $k$-th activation is the sum of $k$ exponentially distributed delays, i.\,e., gamma distributed with shape parameter~$k$.
Whenever nodes vote for different values---due to network delays or withholding---termination requires more activations and the shape parameter of the gamma distribution increases.

We optimize the protocol runtime for the optimistic case where $k$ activations enable termination.
We call $\bar{t} = k \cdot \bar{d} = k/\lambda$ the optimistic expected protocol runtime.
Figure~\ref{fig:safe:bound:time} shows~$\bar{t}$ (in multiples of $\Delta$) for the same parameters as used in Figure~\ref{fig:safe:bound:k}.
Observe that depending on $\varepsilon$ and $\alpha$, different activation delays $\bar{d}$ minimize the protocol runtime.

As the curves in Figure~\ref{fig:safe:bound:time} are neither convex nor continuous and expensive to evaluate,
we identify the minima using Bayesian optimization~\cite{bayesOpt} and report them in Table~\ref{tab:optima}.

\begin{table}
  \caption{%
    Configurations of~\protAgree{} minimizing protocol runtime~$\bar{t}$.
  }
  \label{tab:optima}
  \setlength\tabcolsep{5pt}
  \begin{tabular}{ccccc}
  \toprule
  $\alpha$ & $\varepsilon$ & $\nicefrac{\bar{d}}{\Delta}$ & $k$ & $\nicefrac{\bar{t}}{\Delta}$ \\
  \midrule
   $0$ & ${10}^{-1}$ & $3.0$ & $2$ & $6$ \\
   $0$ & ${10}^{-2}$ & $2.6$ & $5$ & $13$ \\
   $0$ & ${10}^{-3}$ & $2.5$ & $8$ & $20$ \\
   $0$ & ${10}^{-4}$ & $2.7$ & $10$ & $27$ \\
   $\nicefrac{1}{10}$ & ${10}^{-1}$ & $2.8$ & $4$ & $11$ \\
   $\nicefrac{1}{10}$ & ${10}^{-2}$ & $3.3$ & $9$ & $29$ \\
   $\nicefrac{1}{10}$ & ${10}^{-3}$ & $3.2$ & $16$ & $51$ \\
   $\nicefrac{1}{10}$ & ${10}^{-4}$ & $3.8$ & $20$ & $75$ \\
  \bottomrule
\end{tabular}%
\hfill%
\begin{tabular}{ccccc}
  \toprule
  $\alpha$ & $\varepsilon$ & $\nicefrac{\bar{d}}{\Delta}$ & $k$ & $\nicefrac{\bar{t}}{\Delta}$ \\
  \midrule
   $\nicefrac{1}{4}$ & ${10}^{-1}$ & $4.0$ & $10$ & $40$ \\
   $\nicefrac{1}{4}$ & ${10}^{-2}$ & $5.1$ & $24$ & $123$ \\
   $\nicefrac{1}{4}$ & ${10}^{-3}$ & $5.7$ & $40$ & $226$ \\
   $\nicefrac{1}{4}$ & ${10}^{-4}$ & $5.8$ & $58$ & $339$ \\
   $\nicefrac{1}{3}$ & ${10}^{-1}$ & $6.1$ & $19$ & $115$ \\
   $\nicefrac{1}{3}$ & ${10}^{-2}$ & $7.4$ & $51$ & $375$ \\
   $\nicefrac{1}{3}$ & ${10}^{-3}$ & $7.9$ & $88$ & $699$ \\
   $\nicefrac{1}{3}$ & ${10}^{-4}$ & $8.8$ & $121$ & $1067$ \\
  \bottomrule
\end{tabular}%

  \medskip
  \begin{minipage}{\linewidth}
    \small
    We set attacker $\alpha$ and failure bound $\varepsilon$.
    We optimize activation delay $\bar{d}$ and threshold $k$.
    The synchrony parameter $\Delta$ serves as unit of time.
  \end{minipage}
\end{table}

Now we see that in the above example ($\varepsilon=10^{-3}$, $\alpha=\nicefrac1{10}$), the protocol runtime can be reduced from $\bar{t}=72\,\Delta$ to $51\,\Delta$ without sacrificing safety by choosing $(k, \bar{d})=(16, 3.2\,\Delta)$ instead of $(9, 8\,\Delta)$.
For perspective, with a network latency bound $\Delta = 2$ seconds, the puzzle difficulty should be adjusted to one solution every $6.4$ seconds.
The protocol $\protAgree[16]$ would terminate in about 102 seconds.

\subsection{Comparison to Sequential Proof-of-Work}
\label{sec:voting:comparison} \label{sec:li2021}

In their contribution to AFT\,'21, Li et al.~\cite{li2021CloseLatency} provide concrete bounds for the failure probability of Nakamoto consensus.
The \enquote{achievable security latency function} $\bar{\epsilon}(t)$ for Nakamoto consensus as stated in \cite[Theorem 3.5]{li2021CloseLatency} provides an upper bound for the failure probability after waiting for a given confirmation time $t$.
Our models and assumptions are compatible, but we derive failure probabilities after termination and hence after stochastic runtime.

To enable comparison of sequential and parallel proof-of-work, we fix the time frame to $\bar{t} = 10$ minutes.
We also fix the attacker $\alpha$ and propagation delay $\Delta$.
For parallel proof-of-work, we optimize~$k$ for minimal failure probability of \protAgree{} subject to $\bar{d} \cdot k = \bar{t}$.
The resulting configuration exhibits an expected protocol runtime of 10 minutes.
For sequential proof-of-work, we optimize the block interval $\bar{d}_\text{seq}$ for minimal failure probability after 10 minutes.

Table~\ref{tab:runtime600} compares the failure probability $\varepsilon$ of \protAgree{} with the achievable security $\varepsilon_\text{seq}$ of Nakamoto consensus for various combinations of $\Delta$ and $\alpha$.
Note that Li et al.~\cite{li2021CloseLatency} do not define $\bar{\epsilon}(t)$ for all combinations of $\alpha$ and $\Delta$.
We omit the undefined values from the table.
Observe that our concrete bounds for parallel proof-of-work consistently outperform the bounds for sequential proof-of-work by at least two orders of magnitude.

\begin{remark}
  The true advantage is smaller because Li et al.~\cite{li2021CloseLatency} consider a long-running blockchain protocol, whereas our agreement protocol has a limited time horizon.
  Moreover, concurrent work presents improved bounds for sequential proof-of-work~\cite{gazi2021PracticalSettlement, guo2022BitcoinLatency}.
\end{remark}

\begin{table}
  \caption{
    Advantage of parallel over sequential proof-of-work.
  } %
  \label{tab:runtime600}
  \centering
  \setlength{\tabcolsep}{5.45pt}
  \begin{tabular}{cccS[table-format=2.1]S[table-format=1.1e2]S[table-format=3.1]S[table-format=2.1]S[table-format=1.1e2]}
    \toprule
    \multicolumn{2}{c}{Parameters} &
    \multicolumn{3}{c}{Parallel} &
    \multicolumn{3}{c}{Sequential~\cite{li2021CloseLatency}}\\
    \cmidrule(lr){1-2}
    \cmidrule(lr){3-5}
    \cmidrule(lr){6-8}
    $\Delta$ & $\alpha$ &
    $k$ & {$\bar{d}$} & {$\varepsilon$} &
    {$\bar{k}_{\text{seq}}$} & {$\bar{d}_{\text{seq}}$} & {$\varepsilon_{\text{seq}}$} \\
    \midrule
    $1$ & $\nicefrac{1}{10}$ & $77$ & 7.8  & 6.3e-20 & 192.7 & 3.1  & 9.8e-15 \\
    $1$ & $\nicefrac{1}{4}$  & $95$ & 6.3  & 7.3e-07 & 136.6 & 4.4  & 1.8e-03 \\
    $1$ & $\nicefrac{1}{3}$  & $76$ & 7.9  & 1.9e-03 & 103.5 & 5.8  & 2.6e-01 \\
    $2$ & $\nicefrac{1}{10}$ & $76$ & 7.9  & 3.9e-13 & 96.9  & 6.2  & 3.4e-07 \\
    $2$ & $\nicefrac{1}{4}$  & $51$ & 11.8 & 2.2e-04 & 68.8  & 8.7  & 8.8e-02 \\
    $2$ & $\nicefrac{1}{3}$  & $43$ & 14.0 & 1.8e-02 & {--}  & {--} & {--}    \\
    $4$ & $\nicefrac{1}{10}$ & $39$ & 15.4 & 1.2e-07 & 49.0  & 12.2 & 1.8e-03 \\
    $4$ & $\nicefrac{1}{4}$  & $28$ & 21.4 & 5.3e-03 & 34.8  & 17.2 & 5.2e-01 \\
    $4$ & $\nicefrac{1}{3}$  & $24$ & 25.0 & 6.9e-02 & {--}  & {--} & {--}    \\
    \bottomrule
  \end{tabular}

  \medskip
  \small
  \setlength{\tabcolsep}{3pt}
  \begin{tabular}{ll}
    $\Delta$ & propagation delay, seconds \\
    $\alpha$  & attacker's compute \\
    $k$ & number of votes \\
    $\bar{d}$ & activation delay, seconds \\
  \end{tabular}\hfill%
  \begin{tabular}{ll}
    $\varepsilon$  & failure probability of \protAgree{}\\
    $\bar{d}_{\text{seq}}$ & block interval, seconds \\
    $\bar{k}_{\text{seq}}$ & expected number of blocks\\
    $\varepsilon_{\text{seq}}$ & failure prob.~after 10 minutes\\
  \end{tabular}
\end{table}

\section{Proof-of-Work Blockchain} \label{sec:protocol} \label{sec:c1btc}

Protocol~\protAgree{} solves agreement using parallel proof-of-work.
In this section, we propose a replication protocol~\protChain{} that repeatedly runs~\protAgree{} to continuously agree on a growing sequence of values.
In a nutshell, \protChain{}~is a blockchain protocol where the participants use~\protAgree{} to agree on each appended block. %

Time is divided in epochs of variable length, determined by the runtime of~\protAgree{}.
Each epoch extends the blockchain by one block and confirms the value of the preceding epoch's block using~\protAgree{}.
The safety guarantees of~\protAgree{} imply that possible conflicting block proposals for the current epoch are resolved in the next epoch.

\subsection{Prerequisites} \label{sec:prot:req}

In addition to the network assumptions of \protAgree{} (Sect.~\ref{sec:voting:model:communication}), we assume interfaces to an application layer and the availability of cryptographic primitives.

\subsubsection{Application} \label{sec:proto:app}

\protChain{} enables state replication and may serve as a basis for different applications~\cite{lamport1978TimeClocks, schneider1990ImplementingFaulttolerant, abraham2017BlockchainConsensus}.
For example, a simple cryptocurrency could append a list of transactions to each block.
Jointly, the confirmed blocks would form a distributed ledger.
More advanced applications could add scalability layers that only replicate selected decisions using \protChain{} while handling other state updates separately \cite{eyal2016BitcoinNGScalable,kogias2016EnhancingBitcoin,pass2018ThunderellaBlockchains}.

We require that the application offers an interface with two procedures.
\Call{getUpdate}{} returns a valid state update that \protChain{} can use for block proposals.
\Call{applyUpdate}{} passes replicated state updates to the application.
The application may have other means to access the broadcast network directly.
For example, cryptocurrencies share transactions provisionally before they are written in blocks.

\subsubsection{Cryptography} \label{sec:proto:crypto}

\protChain{} uses cryptographic hash functions for the hash-linking of blocks and the proof-of-work puzzle.
The hash function used for the linking must be cryptographically secure.
The hash function used for the proof-of-work puzzle requires the same stronger assumptions as in \nc{}~\cite{abraham2017BlockchainConsensus}.
In principle, one could separate these concerns and use two different hash functions.
For simplicity, we use a single hash function~$\hash$ satisfying both requirements.
The reader can safely assume $\hash = \operatorname{SHA3}$.

In addition, \protChain{}~uses a secure digital signature scheme~\cite[Def.~12.1, p.~442]{katz2014IntroductionModern}  given by the procedures \Call{generateKeyPair}{}, \Call{checkSignature}{}, and \Call{sign}{}.

\subsection{Protocol \protChain{}}

We start with presenting core aspects of~\protChain{} in Sections~\ref{sec:proto:firstaspect} to~\ref{sec:proto:lastaspect} and integrate them into a complete protocol in Section~\ref{sec:proto:integration}.

\subsubsection{Votes} \label{sec:proto:vote} \label{sec:proto:firstaspect}

A vote is a triple \vote{}, with $\hash\vote\leq \vthres$.
We say \varReference{} is the value voted for, \varID{} is the public key of the voting node, and \varSolution{} is the proof-of-work puzzle solution.
The threshold \vthres{} represents \protChain{}'s proof-of-work difficulty parameter and is set externally.

\subsubsection{Quorums} \label{sec:proto:quorum}

A \qsize{}-quorum is a set of~\qsize{} valid votes for the same value.
A list $Q = \{(\varID_i, \varSolution_i)\}$ represents a valid \qsize{}-quorum
for value~$\varReference$, if the following conditions hold:
\begin{enumerate}
  \item \label{quorum:size}$|Q| = \qsize$
  \item \label{quorum:threshold} $\forall\, 1 \leq i \leq \qsize \colon {\hash(\varReference, \varID_i, \varSolution_i)} \leq \vthres$
  \item \label{quorum:order} $\forall\, 1 \leq i < \qsize \colon \hash(\varReference, \varID_i, \varSolution_i) < \hash(\varReference, \varID_{i+1}, \varSolution_{i+1})$
\end{enumerate}
The first condition defines the quorum size~$k$.
The second condition ensures that all votes are valid.
The third condition eliminates duplicates and imposes a canonical order which we use for leader selection.
We write~$Q[1]$ to address the first vote in the quorum.

\begin{remark}
  The above definitions allow for single nodes providing multiple votes to a single quorum using the same public key.
  This is intentional.
  Sybil attacks are mitigated by the scarcity of votes, not by the scarcity of public keys.
\end{remark}

\subsubsection{Leader Selection} \label{sec:proto:election}

We say that node $A$ is leader for the epoch that produces $Q$ if $A$ contributed the smallest vote~$Q[1]$.
Only leaders are allowed to propose new blocks.
Nodes verify leadership based on the public key $\varID_1$, which is part of $Q[1]$.

\begin{remark} \label{rem:leader}
  Leader selection originates from the distributed system literature (e.\,g.~\cite{garciamolina1982ElectionsDistributed, castro2002PracticalByzantine, ongaro2014SearchUnderstandable, yin2019HotStuffBFT}), where it is used to improve performance in the optimistic case that the leader follows the rules.
  A similar, leader-based performance improvement has been proposed for \nc{}~\cite{eyal2016BitcoinNGScalable}.
  Our leader selection mechanism is an optimization as well.
  It reduces the number of proposals per epoch and thereby improves communication efficiency.
  Recall that the agreement protocol $\protAgree$ resolves conflicting preferences even if all nodes started with their own preferred value~(Sect.~\ref{sec:voting}).
  Thus, \protChain{} is secure even if leaders equivocate or multiple leaders are selected.
\end{remark}

\subsubsection{Blocks} \label{sec:proto:block}

A block is a proposed extension to the blockchain.
Besides the application payload, a block holds additional values that ensure orderly execution of the agreement~\protAgree{} and the leader selection mechanism according to Sections~\ref{sec:proto:vote} to~\ref{sec:proto:election}.
A valid block~$b$ contains the following information.
\begin{enumerate}
  \item $\opParent(b)$ is either the hash of a previous valid block or equal to the protocol parameter~$\genesisHash$, which characterizes the instance of the protocol;%
    \footnote{In blockchain jargon, $\genesisHash$ is the hash of the genesis block.}
  \item $\opQuorum(b)$ is a valid $k$-quorum for $\opParent(b)$;
  \item $\opPayload(b)$ is the proposed state update returned from \Call{getUpdate}{};%
  \item \sloppy $\opSignature(b)$ is a valid signature of the triple $\left(\opParent(b), \opQuorum(b), \opPayload(b)\right)$ signed with the private key corresponding to the public key in $\opQuorum(b)[1]$.
\end{enumerate}

The first condition imposes a sequential order on the list of blocks.
The second condition ensures that all nodes agree on the previous block before proposing a new block (\protAgree{}, Sect.~\ref{sec:voting}).
The forth condition restricts the ability to propose blocks to selected leaders (Sect.~\ref{sec:proto:election}).

\subsubsection{Local Block Tree} \label{sec:proto:store}

Each node locally maintains a hash-linked \emph{tree} of blocks~\blockstore.
We write~$\blockstore[h]$ to access the block~$b$ with $\hash (b) = h$.
For each block $b$, nodes maintain
\begin{enumerate}
  \item $\opHeight(b)$, the number of predecessors of~$b$ in~\blockstore,
  \item $\opState(b)$, the application state associated with $b$, and
  \item $\opVotes(b)$, the set of votes that confirm $b$.
\end{enumerate}

\subsubsection{Block Preference} \label{sec:proto:preference}

Nodes prefer block~$a$ over block~$b$ if,
\begin{enumerate}
  \item $\opHeight(a) > \opHeight(b)$, or
  \item $\opHeight(a) = \opHeight(b) \wedge |\opVotes(a)| > |\opVotes(b)|$.
\end{enumerate}
In other words, they follow the longest chain (1) between epochs and the voting protocol~\protAgree{} within each epoch (2).
This rule is ambiguous if there are multiple blocks of equal height and with the same number of confirming votes.
In this case, nodes prefer the block first received.
The embedded voting protocol~\protAgree{} makes the nodes agree on the same parent block until the end of the epoch.

\begin{remark}
Under normal operation with a constant set of nodes (i.\,e., no late joining), the longest chain rule will only be invoked to disambiguate the last epoch.
The $\varepsilon$-safety guarantee of \protAgree{} ensures that longer forks are unlikely.
\end{remark}

\subsubsection{Proof-of-Work Voting} \label{sec:proto:pow}

Nodes continuously try to find and share valid votes for their preferred block.
Recall that a valid vote $v = \left(\hash (b), \varID, s\right)$ satisfies $\hash (v) \leq \vthres$, where $b$~is the preferred block and $\varID$~is the node's public key.
Due to the properties of the hash function (Sect.~\ref{sec:proto:crypto}), the best solution strategy is iterative trial and error for different values of~$s$.
Solving this hash puzzle on physical hardware implements the stochastic clock~$P_\lambda$ presented in Section~\ref{sec:voting:model:pow} for the arrival of votes in a distributed system.
Parameter $\vthres$ must be adjusted to the desired puzzle solving rate $\lambda$ for a given technology and proof-of-work capacity.

\subsubsection{Proposing} \label{sec:proto:propose} \label{sec:proto:lastaspect}

Nodes assume leadership whenever possible.
I.\,e., they constantly check whether they can form a quorum~$Q$ where the smallest vote~$Q[1]$ is their own.
If so, they request a state update from the application, integrate it as payload into a new valid block (Sect.~\ref{sec:proto:block}), and broadcast it.

\subsubsection{Integration} \label{sec:proto:integration}

\begin{algorithm}
  \caption{Blockchain protocol \protChain{}}
  \label{alg:chain}
  \small
  \begin{algorithmic}[1]
    \Upon{init} \label{l:proto:init:0}
      \State \myID, \myKey{} $\gets$ \Call{generateKeyPair}{\,}
      \State $\opHeight(\blockstore[\initialhead]) \gets 0$
      \Comment{\initialhead: hard-coded value}
      \label{l:proto:init:n}
    \EndUpon

    \algskip

    \Upon{deliver $\mid$ vote $v$} \Comment{from line~\ref{l:sendvote}} \label{lb:rcvote}
      \State \Call{count}{$v$}
    \EndUpon

    \algskip

    \Upon{deliver $\mid$ block $b$} \Comment{from line~\ref{l:sendblock}} \label{lb:rcblock}
    \ForAll{$(\varID, \varSolution)$ \textbf{in} $\opQuorum(b)$} \Comment {count all votes}
    \State \Call{count}{$\opParent(b), \varID{}, \varSolution{}$}
    \EndFor
    \State \Call{store}{$b$}
    \EndUpon

    \algskip

    \Procedure{count}{vote $(r, p, s)$} \label{lb:countstart}
    \If {$\hash\vote \leq \vthres$}
    \Comment{vote validity, Sect.~\ref{sec:proto:vote}}
    \State $\opVotes(\blockstore[r]) \gets \opVotes(\blockstore[r]) \cup \{\text{(p, s)}\}$ \label{lb:changeA}
    \EndIf
    \EndProcedure \label{lb:countend}

    \algskip

    \Procedure{store}{block $b$} \label{lb:storestart}
    \If{$b$ is valid by Sect.~\ref{sec:proto:block} \textbf{and} $\hash(b) \not\in \blockstore$}
    \State $\opVotes(b) \gets \emptyset$
    \State $a \gets \blockstore[\opParent(b)]$
    \State $\opHeight(b) \gets \opHeight(a) + 1$
    \State $\opState(b) \gets \Call{applyUpdate}{\opState(a), \opPayload(b)}$
    \State $\blockstore[\hash(b)] \gets b$
    \EndIf
    \EndProcedure \label{lb:storeend}

    \algskip

    \Procedure{preferred}{}
    \State Find most preferred block $b$ according to Sect.~\ref{sec:proto:preference}.
    \State \Return $\hash (b)$
    \EndProcedure

    \algskip

    \Procedure{leader}{$r$}
    \State Try to form $k$-quorum $Q$ for $r$ as leader by Sect.~\ref{sec:proto:election}.
    \If{possible} \Return $Q$
    \Else\ \Return $\bot$
    \EndIf
    \EndProcedure

    \algskip

    \UponC{change to $\blockstore$} \label{lb:change}
    \State $r \gets$ \Call{preferred}{} ;\, $Q \gets$ \Call{leader}{$r$}
    \If{Q} \Comment{build block according to Sect.~\ref{sec:proto:propose}} \label{lb:proposestart}
    \State $\opParent(b) \gets r$
    \State $\opQuorum(b) \gets Q$
    \State $\opPayload(b) \gets \Call{getUpdate}{\opState(\blockstore[r])}$
    \State $\opSignature(b) \gets \Call{sign}{b, \myKey}$
    \State \Call{store}{b}
    \State \Call{broadcast}{block b} \label{l:sendblock}
    \EndIf \label{lb:proposeend}
    \EndUponC

    \algskip

    \Procedure{proofOfWork}{} \Comment{background task} \label{lb:powstart}
    \Loop
    \State $r \gets$ \Call{preferred}{} ;\, $s \gets s+1$ \label{l:powloops}
    \If{$\hash(r, \myID, s) \leq \vthres$}
    \State $\opVotes(\blockstore[r]) \gets \opVotes(\blockstore[r]) \cup \{\text{(\myID, s)}\}$ \label{lb:changeB}
    \If{\textbf{not} \Call{leader}{$r$}} \label{l:leadershipa}
    \State \Call{broadcast}{vote $(r, \myID, s)$} \label{l:sendvote} \label{l:powloope}
       \hspace{-3cm} 
    \EndIf \EndIf \EndLoop
    \EndProcedure \label{lb:powend}
  \end{algorithmic}
\end{algorithm}

Algorithm~\ref{alg:chain} integrates Sections~\ref{sec:proto:vote} to~\ref{sec:proto:propose} into the complete protocol \protChain.
During initialization, nodes generate a key pair for the digital signature scheme~(Sect.~\ref{sec:proto:crypto}) and initialize the empty block tree (ln.~\ref{l:proto:init:0}--\ref{l:proto:init:n}).
Two event-handlers process incoming messages (ln.~\ref{lb:rcvote} and~\ref{lb:rcblock}).
Valid votes are stored (ln.~\ref{lb:countstart}--\ref{lb:countend}) and valid blocks are appended to the local block tree (ln.~\ref{lb:storestart}--\ref{lb:storeend}).
In the background, nodes continuously try to solve proof-of-work puzzles in order to cast votes for their preferred value (ln.~\ref{lb:powstart}--\ref{lb:powend}).
The path between \genesisHash{} and the preferred value in the local block tree represents the current version of the blockchain.
Whenever the local block tree changes (ln.~\ref{lb:change} triggered from ln.~\ref{lb:changeA} and \ref{lb:changeB}), nodes try to assume leadership and propose a new block (ln.~\ref{lb:proposestart}--\ref{lb:proposeend}).

We visualize a typical execution of~\protChain{} in Figure~\ref{fig:timeline} in the Appendix and compare it to sequential proof-of-work.

\begin{remark} \label{rem:c1btc}
  \protChain[1], i.\,e., \protChain{} with $k=1$, closely resembles Bitcoin as proposed by Nakamoto~\cite{nakamoto2008BitcoinPeertopeer}.
  However, we highlight one key difference:
  in Bitcoin, blocks carry a first proof-of-work confirmation of the payload proposed within the block itself.
  In~\protChain{}, the proof-of-work solutions confirm the \emph{previous} block.
  This enables parallel puzzle solving for $k>1$. %
\end{remark}

\subsection{Finality} \label{sec:proto:state}

Finality means that the application accepts a commit when it is deemed safe.
Deterministic finality is not achievable with stochastic protocols, but our concrete safety bounds help to make an informed decision on when to accept.
By implementing \protAgree{} in \protChain{}, we ensure that the commit of the state update in a block with height~$i$ is $\varepsilon$-safe as soon as a block with height $i+1$ is observed.

For example, the configuration $\alpha=\nicefrac14$, $\qsize=\k, \bar{t}=600$ from Table~\ref{tab:runtime600} is $0.0002$-safe. This implies that the worst case attacker (within the model) succeeds in causing inconsistent commits in one of 5,000 attempts.
In practice, such an attacker would find it easier to temporarily increase the share in compute power above $\alpha=\nicefrac12$, where every system solely based on proof-of-work fails.
With proof-of-work capacity being available for rent~\cite{bissias2022PricingSecurity}, this turns into an economic argument which is orthogonal to the design assumptions of \protChain.
This leads us to a brief discussion of incentives.

\subsection{Incentives} \label{sec:incentives}

It is possible to motivate participation in \protChain{} by rewarding puzzle solutions.
This requires some kind of virtual asset that %
can be transferred to a vote's public key.
Claiming the reward would depend on the corresponding private key.
\protChain{} could allot a constant reward per puzzle. %
As votes occur $\qsize$ times more frequently than blocks, \protChain's mining income would be less volatile than in \nc, making participation more attractive to risk-averse agents with small compute power.

It is tempting to demand that the reward scheme is incentive compatible, i.\,e., that correct execution is a unique maximum of the nodes' utility function.
However, it is not trivial to achieve incentive compatibility because utility of rewards \emph{outside} the system may affect the willingness to participate \emph{in} the system~\cite{ford2019RationalitySelfdefeating}.
We do not know any blockchain protocol analysis that solves this problem convincingly.
Thus, \protChain{} is designed to support rewards as a means to encourage participation, but its security intentionally does not depend on incentives.
This is a feature, not a bug.

\section{Evaluation} \label{sec:eval}

We evaluate \protChain{} by discrete event network simulation.
We implement \protChain{} and the network simulation in OCaml.
All results are reproducible with the code provided online\citerepo.

We choose the configuration $k = \k$ and $\lambda = \nicefrac{\k}{600}$, which
is optimized for $\alpha = \nicefrac14$ and $\Delta = 2''$.
Its failure probability is at most $2.2 \cdot 10^{-4}$ (see Sect.~\ref{sec:li2021}, Tab.~\ref{tab:runtime600}).
The expected block interval is 10 minutes, which enables comparison to \nc{}, more specifically Bitcoin.
For the purpose of this simulation, Bitcoin is equivalent to \protChain[1]{} with $\lambda = \nicefrac{1}{600}$ (see Sect.~\ref{sec:c1btc} Remark~\ref{rem:c1btc}).

While the worst-case propagation delay $\Delta$ is specified at design time, realistic network latencies vary.
In the simulation, we set an expected network delay $\delta$ and use it to draw individual delays for each message delivery from
\begin{enumerate}
  \item a \textbf{uniform} distribution on the interval~$[0, 2\cdot\delta]$, and
  \item an \textbf{exponential} distribution with rate $\delta^{-1}$.
\end{enumerate}

We also consider that votes may propagate faster than blocks because they are much smaller and their validation does not depend on the application state.
To this end we define
\begin{enumerate}
  \item a \textbf{simple} treatment where $\delta = \Delta = 2''$ for all messages, and
  \item a \textbf{realistic} treatment where blocks propagate with  $\delta_{\textbf{b}} = 2''$ and votes eight times faster,  $\delta_{\textbf{v}} = \nicefrac14''$.
\end{enumerate}
The cross product of the two distributions and two treatments of small messages gives us four scenarios to be simulated. %
Note that for all scenarios some delays will be greater than the assumed worst-case propagation delay $\Delta$.
For some measurements, we will raise $\delta$ beyond $\Delta$ to put the protocol under even more pressure.

{ \def\nNodes{1024}
\def\nBlocks{4096}
\def\nConfirmations{\NotUnique}
\def\nIterations{64}
\def\name{all}

 Unless stated otherwise, measurements are based on a simulated network with~\nNodes{} nodes.%
  \footnote{Measurements suggest that there are roughly 10\,000 Bitcoin nodes, while 80\,\% of the compute power is held by the top 10 agents~\cite{mariem2020AllThat}.}
  For each experiment, we average over~\nIterations{} independent executions up to block height~\nBlocks{}.
  All figures showing variation do this by plotting $\pm 1.96$~standard deviations around the mean of the \nIterations{} independent executions.  %
}
For all executions of \protChain[\k]{}, we checked for inconsistent commits, %
which did not occur. %
As another plausibility check, we verified that the simulated block intervals of \protChain[1] and \protChain[\k] match the theoretical distributions described in Section~\ref{sec:voting:config}.%

\subsection{Robustness} \label{sec:robustness}

We evaluate the robustness of \protChain[\k]{} against excessive latency, churn, and leader failure by measuring block intervals.
Recall that the simulated protocols are configured for a 10 minute interval in optimal conditions.
The puzzle solving rate is constant.
Stress implies wasted proof-of-work and hence higher observed block intervals.%
\footnote{This metric relates to the orphan rate in the literature on sequential proof-of-work.}

\subsubsection{Latency} \label{sec:eval:latency}

We use the \textbf{simple/exponential} scenario and vary the expected propagation delay $\delta$ from $\nicefrac14$ to $16$ seconds.
Recall that the choice of $k=\k$ is optimized for $\Delta = 2$ seconds.
Larger expected propagation delays put the protocol under stress.
Figure~\ref{fig:latency:simple:exponential} shows the effect of latency on the block interval.
We observe that even excessive random propagation delays ($\delta = 16$ seconds) slow down \protChain[\k]{}-consensus by only about $5\,\%$.
The \textbf{simple/uniform} scenario exhibits similar behavior.
We refrain from exploring the \textbf{realistic} treatment as it is not obvious how real network latency would affect the ratio of $\delta_{\textbf{b}}$ and $\delta_{\textbf{v}}$.

\begin{figure}
  \begin{tikzpicture}
    \def\data{sim/latency.csv}
    \begin{axis}[
      table/col sep=comma,
      width=\linewidth,
      height=0.65\linewidth,
      xmode=log,
      xtick={.25,.5,1,2,4,8,16},
      xticklabels={$\nicefrac14$,$\nicefrac12$,1,2,4,8,16},
      xlabel={expected propagation delay $\delta$},
      ylabel={block interval},
      y label style={at={(axis description cs:-0.21,.5)},rotate=0,anchor=north},
      legend cell align=left,
      legend pos= north west,
      legend style={font=\footnotesize},
      cycle list/Set1,
      cycle multi list={ Set1\nextlist mark=o\nextlist},
      ]
      \addplot table[y=mean.interval.nc-slow.exponential, x=latency] {\data};
      \addlegendentry{Bitcoin/\protChain[1]{}}
      \addplot table[y=mean.interval.proposed.exponential, x=latency] {\data};
      \addlegendentry{Proposed/\protChain[\k]{}}
      \addlegendimage{color=gray, dashed};
      \addlegendentry{$\Delta$};
      \draw[color=gray, dashed] ({rel axis cs:0,0} -| {axis cs:2,0}) -- ({rel axis cs:0,1} -| {axis cs:2,0});
      \addplot[name path=ncl, draw=none] table[y=mean.interval.low.nc-slow.exponential, x=latency] {\data};
      \addplot[name path=nch, draw=none] table[y=mean.interval.high.nc-slow.exponential, x=latency] {\data};
      \addplot[index of colormap={0 of Set1}, opacity=0.15] fill between[of=ncl and nch];
      \addplot[name path=pl, draw=none] table[y=mean.interval.low.proposed.exponential, x=latency] {\data};
      \addplot[name path=ph, draw=none] table[y=mean.interval.high.proposed.exponential, x=latency] {\data};
      \addplot[index of colormap={1 of Set1},opacity=0.15] fill between[of=pl and ph];
    \end{axis}
  \end{tikzpicture}
  \caption{%
    The effect of latency on the average block interval in the \textbf{simple/exponential} scenario. Time in seconds.
  }
  \label{fig:latency:simple:exponential}
\end{figure}
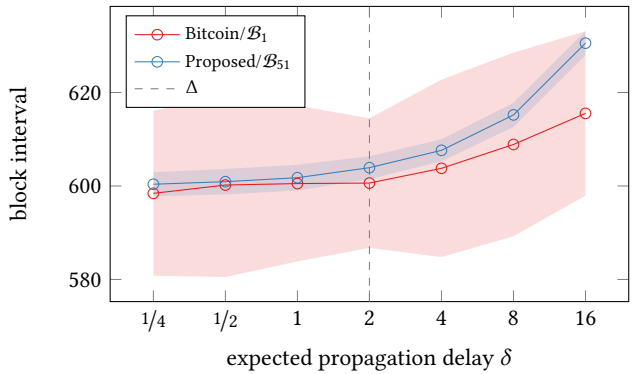

\subsubsection{Churn} \label{sec:eval:churn}

We simulate churn by muting a fraction of nodes (churn ratio) for one hour each.
Muted nodes solve proof-of-work puzzles but do not send or receive messages.
Accordingly, the votes and blocks created by muted nodes represent lost work.
We expect that the block interval is inversely proportional to the churn ratio:
if 50\,\% of the nodes are muted, the average block interval is twice as long.
Figure~\ref{fig:churn} supports this claim for both protocols.

\begin{figure}
  \begin{tikzpicture}
    \def\data{sim/churn.csv}
    \begin{axis}[
      table/col sep=comma,
      width=\linewidth,
      height=0.65\linewidth,
      xlabel={churn ratio},
      ylabel={block interval},
      y label style={at={(axis description cs:-0.21,.5)},rotate=0,anchor=north},
      legend cell align=left,
      legend pos= north west,
      legend style={font=\footnotesize},
      cycle list/Set1,
      cycle multi list={ Set1\nextlist mark=o\nextlist},
      ]
      \addplot table[y=mean.interval.nc-slow.exponential, x=churn] {\data};
      \addlegendentry{Bitcoin/\protChain[1]{}}
      \addplot table[y=mean.interval.proposed.exponential, x=churn] {\data};
      \addlegendentry{Proposed/\protChain[\k]{}}
      \addplot[name path=ncl, draw=none] table[y=mean.interval.low.nc-slow.exponential, x=churn] {\data};
      \addplot[name path=nch, draw=none] table[y=mean.interval.high.nc-slow.exponential, x=churn] {\data};
      \addplot[index of colormap={0 of Set1}, opacity=0.15] fill between[of=ncl and nch];
      \addplot[name path=pl, draw=none] table[y=mean.interval.low.proposed.exponential, x=churn] {\data};
      \addplot[name path=ph, draw=none] table[y=mean.interval.high.proposed.exponential, x=churn] {\data};
      \addplot[index of colormap={1 of Set1},opacity=0.15] fill between[of=pl and ph];
    \end{axis}
  \end{tikzpicture}
  \caption{%
    The effect of churn on the average block interval in the \textbf{realistic/exponential} scenario.
    The churn ratio describes how many of the nodes are passive at any given time.
  }
  \label{fig:churn}
\end{figure}
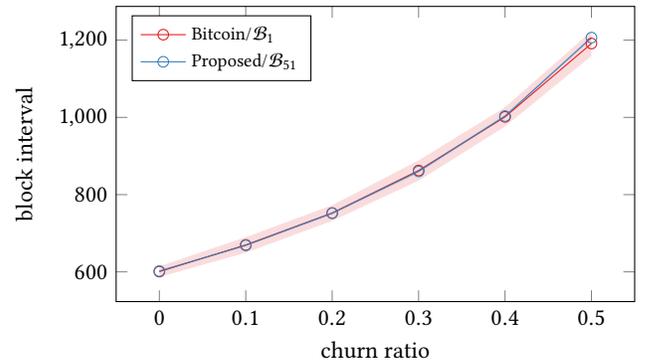

\subsubsection{Leader Failure} \label{sec:eval:failure}

\protChain{} separates proof-of-work (votes) from blocks.
Leaders selected during the epoch may fail to propose at the end of the epoch.
We model such failures by dropping block proposals randomly with constant probability (leader failure rate).

A special property of \protChain{} is that it can reuse votes for different proposals.
Honest nodes reveal at most one new vote with their proposal.
Accordingly, a lost proposal wastes at most the work of one vote.
Therefore, \protChain[\k] can recover fast.
The results in Figure~\ref{fig:failure} support this claim.
For perspective, the right end of the graph simulates a situation where an attacker can monitor all nodes' network traffic and disconnect leaders at discretion with 50\,\% success probability.
Still, the block interval grows only by about $2.5\,\%$. This effect is similar to the robustness against excessive latencies discussed in Section~\ref{sec:eval:latency}.

For \protChain[1]{}, voting, leader selection, and proposing happens in a single message.
Leader failure is similar to churn and hence has a much stronger effect (see Fig.~\ref{fig:churn}).

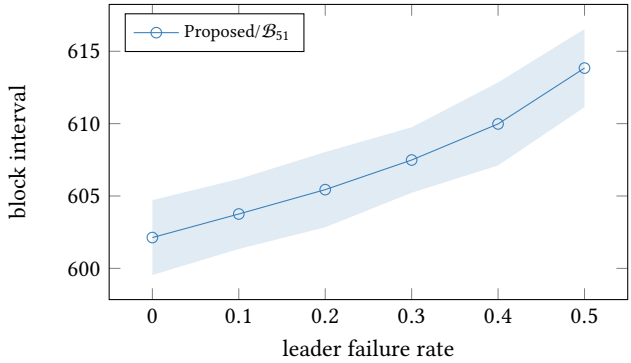
\begin{figure}
  \begin{tikzpicture}
    \def\data{sim/failure.csv}
    \begin{axis}[
      table/col sep=comma,
      width=\linewidth,
      height=0.65\linewidth,
      xlabel={leader failure rate},
      ylabel={block interval},
      y label style={at={(axis description cs:-0.21,.5)},rotate=0,anchor=north},
      legend cell align=left,
      legend pos= north west,
      legend style={font=\footnotesize},
      cycle list/Set1,
      cycle multi list={ Set1\nextlist mark=o\nextlist},
      ]
      \addplot[index of colormap={1 of Set1}, mark=o] table[y=mean.interval.proposed.exponential, x=failure] {\data};
      \addlegendentry{Proposed/\protChain[\k]{}}
      \addplot[name path=pl, draw=none] table[y=mean.interval.low.proposed.exponential, x=failure] {\data};
      \addplot[name path=ph, draw=none] table[y=mean.interval.high.proposed.exponential, x=failure] {\data};
      \addplot[index of colormap={1 of Set1},opacity=0.15] fill between[of=pl and ph];
    \end{axis}
  \end{tikzpicture}
  \caption{%
    The effect of leader failure on the block interval in the \textbf{realistic/exponential} scenario.
    The leader failure rate is the probability that a selected leader fails to propose a block.
  }
  \label{fig:failure}
\end{figure}

\subsection{Security} \label{sec:eval:security}

Zhang and Preneel~\cite{zhang2019LayCommon} propose to evaluate blockchain protocols with respect to the four security aspects
\begin{enumerate}
  \item \emph{subversion gain}, to what extent an attacker can rewrite confirmed blocks,
  \item \emph{chain quality}, how much of the confirmed blocks are proposed by the attacker,
  \item \emph{censorship susceptibility}, how long the attacker can block certain transactions, and
  \item \emph{incentive compatibility}, how much reward the attacker can collect by deviating from the protocol.
\end{enumerate}

Our approach is to derive subversion gain from the $\varepsilon$-safety of~\protAgree{} and then evaluate chain quality and censorship susceptibility jointly.
This is sufficient because both aspects depend on the attacker being selected as a leader.
Turning to incentive compatibility, we argue in Section~\ref{sec:incentives} why it seems impossible to prove this for realistic utility functions.
Zhang and Preneel use a restricted notion in which the attacker utility is the share of rewards assigned by the protocol.
We can evaluate their definition of incentive compatibility, along with chain quality and censoring.

\subsubsection{Subversion Gain}
We provide a consistency analysis for the agreement~\protAgree{} in Section~\ref{sec:voting}.
The proposed \protChain[\k]{} executes \protAgree[\k]{} for each appended block.
The probability that an $\alpha = \nicefrac14$ attacker in a $\Delta=2''$ synchronous network succeeds in causing inconsistent state updates (e.\,g., double spend) is \num{2.2e-4}  (see Tab.~\ref{tab:runtime600}).
The proposed protocol meets this guarantee after one block confirmation, i.\,e., after about 10 minutes.%
We assume that applications wait for this confirmation and conclude that subversion gain is not a practical concern for \protChain[\k].

\begin{figure}
  \begin{tikzpicture}
    \def\data{sim/attacker.csv}
    \begin{axis}[
      table/col sep=comma,
      width=\linewidth,
      height=0.65\linewidth,
      xlabel={$\alpha$},
      ylabel={share},
      y label style={at={(axis description cs:-0.21,.5)},rotate=0,anchor=north},
      legend cell align=left,
      legend pos= north west,
      legend style={font=\footnotesize},
      cycle list/Set1,
      cycle multi list={ Set1\nextlist mark=o\nextlist},
      ]
      \addplot table[y=share.blocks.realistic-exponential, x=attacker] {\data};
      \addlegendentry{blocks};
      \addplot table[y=share.votes.realistic-exponential, x=attacker] {\data};
      \addlegendentry{votes};
      \addplot[gray] table[y=attacker, x=attacker] {\data};
      \addlegendentry{honest ($=\alpha$)};
      \addplot[name path=bl, draw=none] table[y=share.blocks.low.realistic-exponential, x=attacker] {\data};
      \addplot[name path=bh, draw=none] table[y=share.blocks.high.realistic-exponential, x=attacker] {\data};
      \addplot[index of colormap={0 of Set1},opacity=0.15] fill between[of=bl and bh];
      \addplot[name path=vl, draw=none] table[y=share.votes.low.realistic-exponential, x=attacker] {\data};
      \addplot[name path=vh, draw=none] table[y=share.votes.high.realistic-exponential, x=attacker] {\data};
      \addplot[index of colormap={1 of Set1}, opacity=0.15] fill between[of=vl and vh];
      \addplot[index of colormap={0 of Set1}, dashed] table[y=share_of_blocks, x=alpha] {hotpow-withholding-mc.csv};
      \addplot[index of colormap={1 of Set1}, dashed] table[y=share_of_votes, x=alpha] {hotpow-withholding-mc.csv};
    \end{axis}
  \end{tikzpicture}
  \caption{%
    The attacker's share of confirmed blocks and votes as functions of $\alpha$ in the \textbf{realistic/exponential} scenario.
    The attacker uses vote withholding against \protChain[\k]{}.
    The gray line shows the expected shares without attack.
    The dashed lines show results from the Markov chain model.
    The solid lines show the validation in the network simulator.
  }
  \label{fig:attacker}
\end{figure}
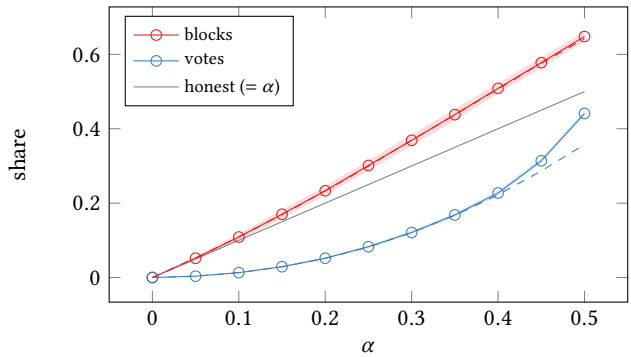

\subsubsection{Chain Quality, Censoring, and Incentives} \label{sec:eval:security:censoring}
Chain quality measures the share of confirmed blocks proposed by the attacker.
Censoring is possible only if the attacker controls the proposed block payload.
Thus, chain quality and censoring reduce to the question of how often an attacker can take leadership. %

A common weakness of sequential proof-of-work protocols relates to information withholding.
Block withholding, proposed by Eyal and Sirer~\cite{eyal2014MajorityNot}, enables selfish mining against Bitcoin. %
\protChain{} is not vulnerable to block withholding because selected leaders who do not propose a block are quickly replaced (see Sect.~\ref{sec:eval:failure}).
The remaining information to be considered in withholding attacks are votes (see Sect.~\ref{sec:voting:alpha} for \protAgree{}; related~\cite{bissias2020BobtailImproved}).
In \protChain{}, the attacker can prolong an epoch by withholding votes until the honest nodes can form a $k$-quorum themselves.
The attacker can use the additional time to mine the smallest vote and be selected as leader.

We first analyze the effectiveness of vote withholding in a single epoch using a Markov chain model (see Appendix~\ref{apx:mc_censor}).
Then we use the network simulator to confirm the results for executions of the protocol over multiple epochs.

Figure~\ref{fig:attacker} shows the success rate of the attacker in red and his number of confirmed votes in blue. %
Solid lines originate from the network simulator and dashed lines from the Markov chain model.
Both evaluation methods concur in the main result: a withholding attacker can become the leader in about $1.3 \cdot \alpha$ cases (65\,\% for $\alpha=50\%$).
His advantage in taking leadership comes at the price of fewer confirmed votes.
If rewards are proportional to votes, this tells us that vote withholding is disincentivized.
For comparison with \nc{}, block withholding strategies give an $\alpha = \nicefrac13$ attacker an advantage of $1.5 \cdot \alpha$.
This factor raises to $2 \cdot \alpha$ for $\alpha=\nicefrac12$~\cite{sapirshtein2016OptimalSelfish}.
Moreover, successful selfish miners receive more rewards than without attack.
The results indicate that \protChain[\k]{} offers higher chain quality, is less susceptible to censorship, and offers fewer incentives to attack than \nc.

\subsection{Overhead} \label{sec:eval:overhead}   %

\begin{figure}
  \begin{tikzpicture}
    \def\data{sim/size.csv}
    \begin{axis}[
      table/col sep=comma,
      width=\linewidth,
      height=0.65\linewidth,
      xmode=log,
      xtick={2,8,32,128,512,2048,8192},
      xticklabels={2,8,32,128,512,2048,8192},
      xlabel={number of nodes},
      ylabel={\# messages / \# blocks / $k$},
      y label style={at={(axis description cs:-0.21,.5)},rotate=0,anchor=north},
      legend cell align=left,
      legend style={font=\footnotesize, anchor=east, at={(0.97,0.72)}},
      cycle list/Set1,
      cycle multi list={ Set1\nextlist mark=o\nextlist},
      ]
      \addplot table[y=messages.rel.nc-slow.exponential, x=size] {\data};
      \addlegendentry{Bitcoin/\protChain[1]{}};
      \addplot table[y=messages.rel.proposed.exponential, x=size] {\data};
      \addlegendentry{Proposed/\protChain[\k]{}};
      \label{curve:size:proposed}
      \addplot[name path=ncl, draw=none] table[y=messages.rel.low.nc-slow.exponential, x=size] {\data};
      \addplot[name path=nch, draw=none] table[y=messages.rel.high.nc-slow.exponential, x=size] {\data};
      \addplot[index of colormap={0 of Set1}, opacity=0.15] fill between[of=ncl and nch];
      \addplot[name path=pl, draw=none] table[y=messages.rel.low.proposed.exponential, x=size] {\data};
      \addplot[name path=ph, draw=none] table[y=messages.rel.high.proposed.exponential, x=size] {\data};
      \addplot[index of colormap={1 of Set1},opacity=0.15] fill between[of=pl and ph];
    \end{axis}
  \end{tikzpicture}
  \caption{%
    Number of broadcast messages per block divided by $k$ for networks of different size in the \textbf{realistic/exponential} scenario.
  }
  \label{fig:size:real:exponential}
\end{figure}
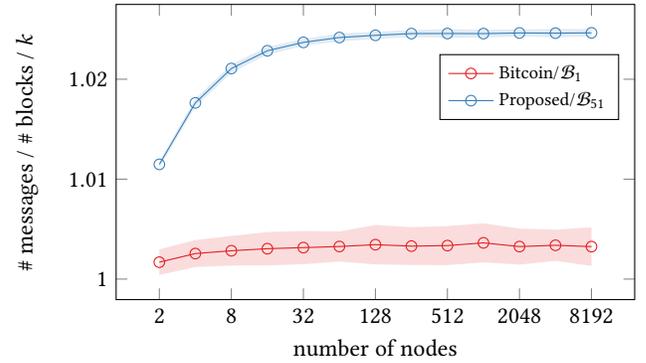

\nc{} requires at least one message broadcast per appended block, namely the block itself, independent of the number of participating nodes.
\protChain{} adds~$\qsize$ message broadcasts per block---one for each vote.
We evaluate the actual number of sent messages in the network simulator.
Figure~\ref{fig:size:real:exponential} shows the number of broadcast messages as a function of the number of blocks and $k$.
Observe that \protChain[\k]{} plateaus at about $1.025 \cdot \qsize$, i.\,e., $52$ broadcasts per block.
This number remains stable as the network scales up.

While the constant factor $\qsize$ may matter for practical networks, it is worth pointing out that vote messages are much smaller than blocks.
Under the conservative assumptions of 256 bits each for the block reference and the public key, and 64 bits for the puzzle solution, a vote is as small as 72\,B.%
\footnote{Bitcoin shortens public keys to 160 bits and uses solutions of 32
bits.
Its blocks are in the order of 1\,MB.}%

The votes also cause a constant storage overhead.
\protChain{} persists the complete quorum of $\qsize$ votes for future verification.
Note that the reference $\varReference$ is redundant in all votes and needs to be stored only once.
Hence, under the assumptions leading to 72\,B message size, the storage overhead of \protChain[\k]{} is about 2\,kB per block.
This is less than 0.2\,\% of Bitcoin's average block size in April 2022. %

\subsection{Detecting Network Splits} \label{sec:eval:split}

The assumption of a $\Delta$-synchronous network is unavoidable for proof-of-work protocols:
delaying the propagation of a defender's puzzle solution is equivalent to reducing his compute power.
With unbounded delays, even a weak attacker can solve sufficiently many puzzles before the defender's solutions propagate~\cite{pass2017AnalysisBlockchain}.

While network splits clearly violate this assumption, we still want to highlight that \protChain{} allows for faster detection of such events than \nc{}.
In \protChain{}, each vote carries one puzzle solution.
The activation delay is exponentially distributed with rate~$\lambda$ (see Sect.~\ref{sec:voting:model:pow}).
In an intact network, the time between received votes should follow the same distribution.
This allows nodes to test the hypothesis of being eclipsed.
For \protChain[\k]{}, a node can distinguish a network split from normal conditions with high confidence after \num{82} seconds of not receiving a vote (error probability $p = 0.1\,\%$).
For comparison, the same hypothesis test would require more than an hour of observation in Bitcoin.

\section{Discussion} \label{sec:discussion}

We discuss our contributions from several perspectives.
Section~\ref{sec:related:analyses} compares the security analysis of \protAgree{} to the relevant literature.
Section~\ref{sec:related:protocols} positions the family of protocols \protChain{} in the design space of blockchain protocols.
Limitations and directions for future work are discussed in Section~\ref{sec:future}.

\subsection{Related Security Analyses} \label{sec:related:analyses}

\begin{table}
  \caption{Comparison of related security analyses.}
  \label{tab:blockchainsecurity}
  \bgroup
  \def\cm{\checkmark}
  \newcommand{\rh}[1]{\rotatebox{45}{#1}}
  \newcolumntype{Y}{>{\centering\arraybackslash}X}
  \begin{tabularx}{\linewidth}{lYYYYYYY@{\hspace{3em}}}
    \toprule
    & \rh{Garay et al., 2015~\cite{garay2015BitcoinBackbone}}
    & \rh{Pass et al., 2017~\cite{pass2017AnalysisBlockchain}}
    & \rh{Kiffer et al., 2018~\cite{kiffer2018BetterMethod}}
    & \rh{Ga\v zi et al., 2020~\cite{gazi2020TightConsistency}}
    & \rh{Dembo et al., 2020~\cite{dembo2020EverythingRace}}
    & \rh{Li et al., 2021~\cite{li2021CloseLatency}}
    & \rh{this paper} \\
    \midrule
    time \\
    \quad - discrete slots & \cm & \cm & \cm & \cm \\
    \quad - continuous & & & & & \cm & \cm & \cm \\
    \midrule
    synchrony & slot & $\Delta$ & $\Delta$ & $\Delta$ & $\Delta$ & $\Delta$ & $\Delta$ \\
    \midrule
    security \\
    \quad - eventual & \cm & \cm & \cm & \cm & \cm \\
    \quad - $\varepsilon$-bounded & & & & & & \cm & \cm \\
    \midrule
    Markov chains & & & \cm & & & & \cm \\
    \bottomrule
  \end{tabularx}
  \egroup
\end{table}

Our security analysis of~\protAgree{} is inspired by the literature on Bitcoin security.
Table~\ref{tab:blockchainsecurity} summarizes selected landmark contributions.

The first formal security argument of the so-called ``Bitcoin backbone protocol''~\cite{garay2015BitcoinBackbone} discretized time in slots.
Each slot represents one puzzle trial attempt.
Communication is synchronous, i.\,e., messages are delivered at the end of each slot.
Security proofs for consistency and chain quality were given asymptotically in the number of slots.
The work formally established the eventual consistency of \nc{} in synchronous networks.

Follow-up work generalized the main results of~\cite{garay2015BitcoinBackbone} for a $\Delta$-synchronous communication model.
It allows messages to be delivered in future slots~\cite{pass2017AnalysisBlockchain}.
Further refinements using Markov chain models resulted in tighter, but still asymptotic bounds~\cite{kiffer2018BetterMethod}.
Recently, two research groups independently derived optimal bounds~\cite{dembo2020EverythingRace,gazi2020TightConsistency} for the attacker threshold $\alpha$.
Dembo et al.~\cite{dembo2020EverythingRace} use continuous time and model proof-of-work as a Poisson process.
All analyses cited above use asymptotic security notions.
A recent contribution to AFT\,'21 breaks with this tradition and provides concrete failure bounds for \nc{} after waiting for a given confirmation time~\cite{li2021CloseLatency} (comp.\ Sect.~\ref{sec:li2021}).
Concurrent work improves these bounds~\cite{gazi2021PracticalSettlement} and simplifies the analysis~\cite{guo2022BitcoinLatency}.

Our analysis of \protAgree{} establishes $\varepsilon$-safety in $\Delta$-synchronous networks.
We use Poisson processes to model proof-of-work in continuous time, and Markov chains as an analytic tool.

\subsection{Related Protocols} \label{sec:related:protocols}

We do not attempt to provide a complete map of the design space for replication protocols since other researchers have specialized on this
task~\cite{bonneau2015SoKResearch, cachin2017BlockchainConsensus, bano2019SoKConsensus, garay2020SoKConsensus}.
Instead, we compare \protChain{} to some of its closest relatives along selected dimensions.

Research on agreement and state replication began in the late 1970s, initially only considering benign but unreliable behavior~\cite{lamport1978TimeClocks, pease1980ReachingAgreement}.
In the early 1980s, Lamport extended the discussion to adversarial behavior. 
He coined the term Byzantine Fault Tolerance (BFT)~\cite{lamport1982ByzantineGenerals}, impliying that at most $f$ out of $n = 3f + 1$ identified nodes may deviate from the protocol.
Early BFT protocols~\cite[e.\,g.,][]{castro2002PracticalByzantine, dwork1988ConsensusPresence} rely on communication between all $n$ nodes for each value agreed upon.
This results in quadratic communication complexity and renders the protocols impractical for more than a dozen nodes.

In that light, Bitcoin~\cite{nakamoto2008BitcoinPeertopeer} can be seen as a technical breakthrough~\cite{abraham2017BlockchainConsensus, narayanan2017BitcoinAcademic}.
Nodes can join the network without obtaining permission from a central gatekeeper.
Active participation, i.\,e., proposing new state updates, is limited by the nodes' ability to solve proof-of-work puzzles.
This yields sub-quadratic communication complexity and allows Bitcoin to scale to thousands of nodes.

However, the sequential proof-of-work scheme proposed with Bitcoin faces a fundamental trade-off between security and throughput.
It supports only one state update per block, while the time between blocks must be in the order of seconds to minutes~\cite{li2021CloseLatency}.
Bitcoin-NG~\cite{eyal2016BitcoinNGScalable} tries to resolve this conflict by separating leader selection and transaction processing.
The miner of a block becomes responsible for appending multiple consecutive state updates until the next leader emerges with the next mined block.
To discipline the leader, Bitcoin-NG relies on incentives.

A number of protocols extend on the idea and try to shift trust from a single to multiple leaders.
E.\,g., Byzcoin~\cite{kogias2016EnhancingBitcoin} selects a committee from the last successful miners, who then run a conventional BFT protocol to agree on the latest state.
Similar layered protocols evolved concurrently~\cite[][]{decker2016BitcoinMeets} and afterwards~\cite{pass2017HybridConsensus, abraham2017SolidaBlockchain, pass2018ThunderellaBlockchains}.
However, the synchronization between the different consensus layers increases protocol complexity~\cite{pass2017HybridConsensus, pass2018ThunderellaBlockchains, abraham2017SolidaBlockchain} and is a source of concern~\cite{decker2016BitcoinMeets, kogias2016EnhancingBitcoin}.
Moreover, all layered protocols assume that the attacker cannot corrupt committee members selectively.

\protChain{} does not layer different consensus mechanisms.
It implements state replication directly from proof-of-work and broadcast primitives.
As demonstrated in Section~\ref{sec:eval:failure}, \protChain{} tolerates selective corruption of committee members.

Other protocols with non-sequential proof-of-work have been proposed.
E.\,g., Phantom~\cite{sompolinsky2021PhantomGhostdag} replaces the linear blockchain data structure with a directed acyclic graph (DAG) for better scalability and faster first confirmation in latent networks.
However, it also increases protocol complexity;
in particular, deriving a total order of state updates from the DAG is NP-hard.
A related idea is to operate multiple sequential blockchains in parallel~\cite{yu2020OHIEBlockchain, bagaria2019PrismDeconstructing, fitzi2020LedgerCombiners, kiffer2018BetterMethod}. %
A close relative of \protChain{} is Bobtail~\cite{bissias2020BobtailImproved}.
It uses multiple proof-of-work puzzles per block, like \protChain{}, but binds a preliminary state update to each vote.
Votes reference earlier votes, hence Bobtail mixes elements of parallel and sequential proof-of-work.
All cited non-sequential protocols lack a principled analysis on when to accept state updates.
\protChain{} is the first non-sequential proof-of-work protocol that comes with concrete bounds for safety.

A separate line of research explores alternatives to the wasteful proof-of-work primitive.
Proof-of-stake protocols~\cite[e.\,g.,][]{chen2019AlgorandSecure, daian2019SnowWhite, david2018OuroborosPraos} select committee members for participation in an agreement procedure based on the distribution of stake in the system's cryptocurrency.
This creates a cyclic dependency which, we think, has not been solved convincingly.
Other proof-of-x protocols try to provide useful services like file storage~\cite{benet2017ProofReplication} or radio network coverage~\cite{haleem2018HeliumDecentralized}.
However, these protocols seem to rely on heuristics for truthful resource accounting.

In the meantime, research on permissioned protocols caught up.
Current BFT-style protocols achieve sub-quadratic communication complexity, enabling deployments with hundreds of nodes~\cite{yin2019HotStuffBFT, abraham2019CommunicationComplexity}.
These protocols depend on identities and can scale applications as long as the BFT assumptions hold.

For digital currencies, it might be worth abandoning state machine replication completely and implement digital asset transfer directly from reliable %
broadcast~\cite{guerraoui2019ConsensusNumber}.
Promising proposals exist for the BFT assumptions~\cite{baudet2020FastPayHighperformance}, proof-of-work~\cite{sompolinsky2016SPECTREFast}, and proof-of-stake~\cite{sliwinski2021AsynchronousProofofstake}.
However, this approach restricts the versatility of the application layer.
It cannot support arbitrary smart contract logic.

\subsection{Limitations and Future Work} \label{sec:future}

We have presented a permissionless replication protocol that achieves $\varepsilon$-safety in $\Delta$-synchronous networks with computationally bounded attackers.
Although our model is widely accepted in the literature~\cite{garay2015BitcoinBackbone, pass2017AnalysisBlockchain}, it is worth discussing its assumptions.

We assume a fixed puzzle solving rate $\lambda$, but in practice agents can add and remove compute power at their discretion.  %
Practical systems try to stabilize $\lambda$ with a control loop known as difficulty adjustment algorithm (DAA)~\cite{%
  kraft2016DifficultyControl,%
  fullmer2018AnalysisDifficulty,%
  hovland2017NonlinearFeedback,%
  harding2020RealtimeBlock,%
  bissias2020RadiumImproving%
}.
For \protChain{}, the accuracy of a DAA could increase in $\qsize$ as every additional vote provides a data point for the estimation of $\lambda$.
Turning to the synchronous network assumption, as shown in Section~\ref{sec:eval:split}, the response time to detect network splits decreases for larger $\qsize$.
This relates to the CAP theorem~\cite{gilbert2002BrewerConjecture}, which states that every distributed system has to sacrifice one out of consistency, availability, and partition tolerance.
\protChain{}, as specified in Algorithm~\ref{alg:chain}, favors availability over consistency.
The trade-off could be changed in favor of consistency by implementing the split detection.
Such a variant of the protocol could notify the application layer to withhold commits and trigger out-of-band resolutions.

The perhaps most problematic assumption is that the attacker's share in compute power $\alpha$ is small (see Table~\ref{tab:optima}).
Violations, especially $\alpha > \nicefrac12$, are catastrophic, but have been observed in practice~\cite{cryptoslate2019percent51attacks}.
Note that the theory in Section~\ref{sec:voting:alpha} does apply for values of $\alpha>\nicefrac13$, but the resulting failure probabilities~$\varepsilon$ are unattractive.
This contrasts with the BFT literature, which requires a hard upper threshold of $n = 3f + 1$ to satisfy an absolute notion of safety.

This leads us to future work.
Our evaluation of \protChain{} is limited to one instance, \protChain[\k], and uses independent propagation delays on a fully connected graph.
We chose the instance for comparability with Bitcoin.
Tests of other protocol configurations with more realistic topologies %
could complete the picture.
However, as the literature reports discrepancies between the topology implied at design time and the one observed in practice~\cite{delgadosegura2019TxProbeDiscovering, mariem2020AllThat}, it is not obvious what topology would be appropriate.
A different direction is to explore improvements in the optimistic case by including %
application-level payloads into \protChain's vote messages.
E.\,g., one could add transactions that do not require consensus~\cite{guerraoui2019ConsensusNumber, sliwinski2021AsynchronousProofofstake, baudet2020FastPayHighperformance} or implement a staging of state updates~\cite{zhang2022NCmaxBreaking}. %
Finally, as explained in~Section~\ref{sec:incentives}, we refrain from designing an incentive mechanism for \protChain{}.
Possible approaches are to search reward-optimizing strategies using Markov Decision Processes (MDPs)~\cite{sapirshtein2016OptimalSelfish, zhang2019LayCommon, zur2020EfficientMDP} or reinforcement learning~\cite{hou2021SquirRLAutomating}.

\section{Conclusion} \label{sec:conclusion}

The proposed family of protocols \protAgree{} proves that unidentified nodes can reach agreement with guaranteed liveness and $\varepsilon$-safety in a $\Delta$-synchronous network using proof-of-work.
The family of protocols \protChain{} shows that parallel proof-of-work enables blockchain protocols with concrete security bounds. %
With \qsize{} chosen as described, \protChain{} enables permissionless state replication that serves certain applications better than existing systems.

It is worth noting that proof-of-work is a wasteful way of establishing agreement.
Many alternatives exist if nodes are identifiable.
The value of this research is to get better guarantees from protocols when there is no alternative to proof-of-work.

\bibliographystyle{ACM-Reference-Format}
\bibliography{code, zotero}

\appendix
\def\apxsection{\section}

\balance
\apxsection{Markov Chain Model for Chain Quality and Censoring}
\label{apx:mc_censor}

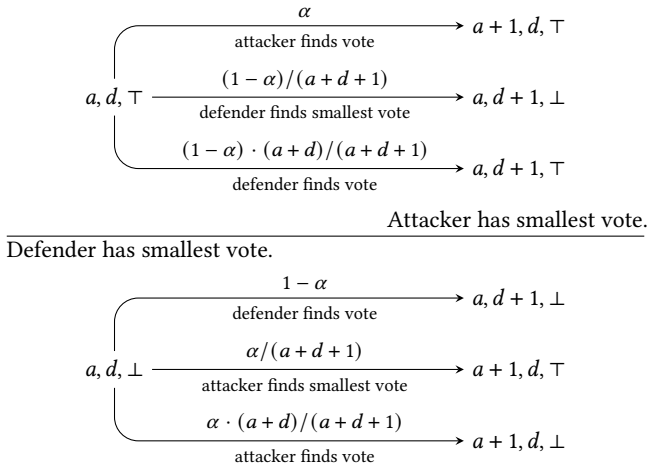
\begin{figure}
  \centering %
  \begin{tikzpicture}[>=stealth,x=17em,y=-7ex]
    \tikzstyle{state}=[]
    \tikzstyle{transition}=[->, draw, rounded corners=1em]
    \tikzstyle{prob}=[anchor=south, font=\small, xshift=-0.5em]
    \tikzstyle{intuition}=[anchor=north, font=\footnotesize, xshift=-0.5em]
    \node[state] (top) at (0,1) {$a,d,\top$};
    \node[state] (ta) at (1,0) {$a+1,d,\top$};
    \node[state] (tb) at (1,1) {$a,d+1,\bot$};
    \node[state] (tc) at (1,2) {$a,d+1,\top$};
    \path[transition] (top) |- (ta);
    \path[transition] (top) -- (tb);
    \path[transition] (top) |- (tc);
    \node[prob] at (0.5, 0) {$\alpha$};
    \node[prob] at (0.5, 1) {$(1 - \alpha)/(a + d + 1)$};
    \node[prob] at (0.5, 2) {$(1 - \alpha)\cdot(a+d)/(a + d + 1)$};
    \node[intuition] at (0.5, 0) {attacker finds vote};
    \node[intuition] at (0.5, 1) {defender finds smallest vote};
    \node[intuition] at (0.5, 2) {defender finds vote};
  \end{tikzpicture}

  \strut\hfill Attacker has smallest vote.
  \hrule
  Defender has smallest vote. \hfill\strut

  \begin{tikzpicture}[>=stealth,x=17em,y=-7ex]
    \tikzstyle{state}=[]
    \tikzstyle{transition}=[->, draw, rounded corners=1em]
    \tikzstyle{prob}=[anchor=south, font=\small, xshift=-0.5em]
    \tikzstyle{intuition}=[anchor=north, font=\footnotesize, xshift=-0.5em]
    \node[state] (top) at (0,1) {$a,d,\bot$};
    \node[state] (ta) at (1,0) {$a,d+1,\bot$};
    \node[state] (tb) at (1,1) {$a+1,d,\top$};
    \node[state] (tc) at (1,2) {$a+1,d,\bot$};
    \path[transition] (top) |- (ta);
    \path[transition] (top) -- (tb);
    \path[transition] (top) |- (tc);
    \node[prob] at (0.5, 0) {$1 - \alpha$};
    \node[prob] at (0.5, 1) {$\alpha/(a + d + 1)$};
    \node[prob] at (0.5, 2) {$\alpha\cdot(a+d)/(a + d + 1)$};
    \node[intuition] at (0.5, 0) {defender finds vote};
    \node[intuition] at (0.5, 1) {attacker finds smallest vote};
    \node[intuition] at (0.5, 2) {attacker finds vote};
  \end{tikzpicture}
  \caption{Markov chain of vote withholding to gain leadership in an epoch of \protChain. Smallest vote implies leadership.}
  \label{fig:mcmc}
\end{figure}

\bgroup

\newcommand{\success}{\textsc{success}}
\newcommand{\fail}{\textsc{fail}}

We describe the Markov chain model used in Section~\ref{sec:eval:security:censoring}.
Let the triple $(a, d, l)$ be the current Markov state,
where $a \in \mathbb N$ denotes the number of withheld attacker votes,
$d \in \mathbb N$ denotes the number of votes found by the defender,
and $l \in \{\bot, \top\}$ is true if the attacker currently holds the smallest vote.
The initial state is $(1, 0, \top)$ with probability $\alpha$ and $(0, 1, \bot)$ otherwise.

Figure~\ref{fig:mcmc} depicts the probabilistic state transitions.
As in Section~\ref{sec:voting:alpha}, transitions occur at the ticks of the stochastic clock.
With probability $\alpha$, the attacker finds a new vote.
The probability that it is the smallest (leading) vote is $1/(a + d +1)$.
This expression follows from $a+d$ old votes cutting the domain into $a+d+1$ bins.
As the hash function's outputs are indistinguishable from a uniform distribution, the expected bin size is $1/(a+d+1)$.
To simplify the figure, we do not show the two terminal states \success{} and \fail.
The former is reached when the attacker proposes a valid block ($l \wedge a + d \geq \qsize$).
Conversely, if $\neg l \wedge d \geq \qsize$, the defenders propose a block.
In all other cases, the epoch continues.

For $k=\k$, the resulting Markov chain has $5204$
states.
We evaluate it with Monte Carlo simulation for
1\,000\,000 epochs, $k=\k$ and $\alpha$ in the range $\left[0, \nicefrac12\right]$.
To validate these results in the context of the protocol and network latency, we implement the same attack in the network simulator and collect data from $\nIterations$ independent executions by $\nNodes$ nodes up to block height $\nBlocks$.
In both cases, we measure chain quality and censorship susceptibility by counting terminations in the state \success{}.
In addition, we analyze incentive compatibility by counting attacker votes.

\egroup %

\bgroup
\newsavebox{\timelineCheckedList}
\savebox{\timelineCheckedList}{\tikz{
    \node[] at (0,0) {\faList};
    \node[draw, circle, fill=white, inner sep=0.5pt] at (3pt,-3pt) {\tiny\faCheck};
}}
\newcommand{\faCheckedList}{\usebox{\timelineCheckedList}}

\tikzset{puzzle/.style={draw, rectangle, fill=white, inner sep=4pt, label={center:\resizebox{5pt}{!}{\faCheck}}}}
\tikzset{leadingPuzzle/.style={draw, rectangle, fill=white, inner sep=5pt, label={center:\resizebox{6pt}{!}{\faTrophy}}}}
\tikzset{block/.style={draw, rectangle, fill=white, inner sep=10pt, label={center:\faList}}}
\tikzset{btcBlock/.style={draw, rectangle, fill=white, inner sep=10pt, label={center:\faCheckedList}}}
\tikzset{boundTo/.style={->, draw, color=gray}}
\tikzset{node/.style={draw, dotted, color=gray}}
\tikzset{epoch/.style={draw, dashed}}

\newcommand{\legend}{
  \begin{tikzpicture}[yscale=-0.66, xscale=0.66]
    \path (0,0) node[] {\faCheck} -- +(1,0) node[anchor=west] {puzzle solution\strut};
    \path (0,1) node[] {\faTrophy} -- +(1,0) node[anchor=west] {leading puzzle solution\strut};
    \path (0,2) node[] {\faList} -- +(1,0) node[anchor=west] {state update\strut};

    \begin{scope}[shift={(+8,-3)}]

      \path (0,3) node[draw,rectangle, inner sep=1.5ex] {} -- +(1,0) node[anchor=west] {message\strut};

      \path[boundTo, <-] (-0.35,4) -- +(0.7,0) {};
      \node[anchor=west] at (1,4) {hash-reference\strut};

      \path[epoch] (-.25, 4.60) -- +(0,0.80) {};
      \path[epoch] ( .25 ,4.60) -- +(0,0.80) {};
      \node[anchor=west] at (1,5) {epoch\strut};
    \end{scope}
  \end{tikzpicture}
}

\input{fig/timeline.tex}

\begin{figure*}
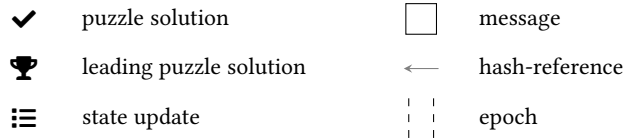
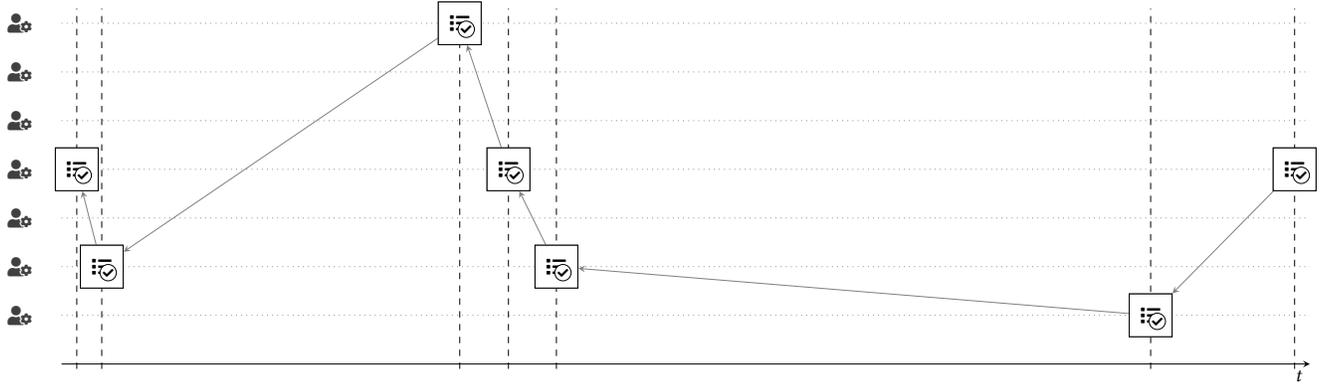
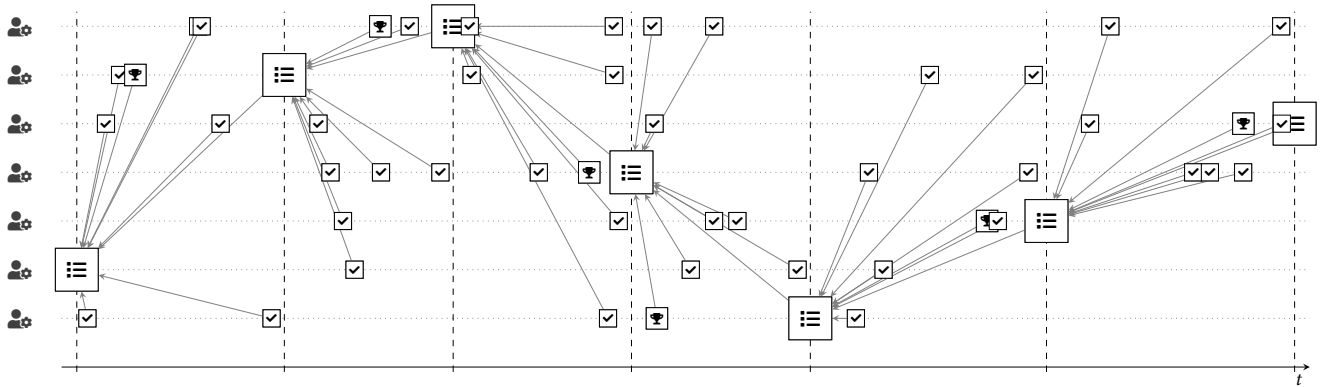

  \begin{subfigure}[b]{\linewidth}
    \vspace{3ex}
    \centering
    \legend
    \caption{Legend.}
    \label{fig:timeline_legend}
  \end{subfigure}
  \begin{subfigure}[b]{\linewidth}
    \resizebox{\linewidth}{!}{ \bitcoind }
    \caption{
      In Bitcoin, puzzles are solved sequentially.
      Puzzle solving times are exponentially distributed, implying irregular block intervals.
    }
    \label{fig:timeline_bitcoin}
  \end{subfigure}
  \begin{subfigure}[b]{\linewidth}
    \vspace{3ex}
    \resizebox{\linewidth}{!}{ \bkl }
    \caption{
      In \protChain{}, multiple smaller puzzles are solved in parallel for each block.
      One of the $\qsize=8$ votes is chosen as leader and the corresponding miner proposes the next block (see Sec.~\ref{sec:proto:election}).
      The embedded agreement protocol~\protAgree{} guarantees $\varepsilon$-safety for the previous block (see Sec.~\ref{sec:voting}).
      Observe that \protChain{} has more regular block intervals and enables more frequent rewards for miners.
    }
    \label{fig:timeline_hotpow}
  \end{subfigure}
  \caption{
    Simulated executions of Bitcoin and \protChain[8]{} on $n=7$ nodes ($y$-axis) over time ($x$-axis).
  }
  \label{fig:timeline}
\end{figure*}

\egroup %

\end{document}